\newcommand{\cmark}{\ding{51}}%
\newcommand{\xmark}{\ding{55}}%
\renewcommand{\arraystretch}{1.2}
\newcolumntype{N}{>{\centering\arraybackslash}m{2.2cm}}
\newcolumntype{M}{>{\centering\arraybackslash}m{1.8cm}}
\def\scatterplotheight{2.8cm}
\newcommand{\paths}[1]{\text{Paths}(#1)}
\newsavebox{\@brx}
\newcommand{\llangle}[1][]{\savebox{\@brx}{\(\m@th{#1\langle}\)}%
  \mathopen{\copy\@brx\mkern2mu\kern-0.9\wd\@brx\usebox{\@brx}}}
\newcommand{\rrangle}[1][]{\savebox{\@brx}{\(\m@th{#1\rangle}\)}%
  \mathclose{\copy\@brx\mkern2mu\kern-0.9\wd\@brx\usebox{\@brx}}}
  \newcommand{\nxt}{\bigcirc}
    \newcommand{\until}{\, U \,}
  \newcommand{\ltl}[0]{\text{LTL}}
  \newcommand{\ltln}[0]{\text{LTL}_{\setminus \nxt }}
  \newcommand{\ctln}[0]{\text{CTL}^{*}_{\setminus \nxt }}
  \newcommand{\ctlnn}[0]{\text{CTL}_{\setminus \nxt }}
\newcommand{\ctls}[0]{\text{CTL}^{*}}
  \definecolor{lpurple}{RGB}{251,181,255}
  \definecolor{lred}{RGB}{255,170,170}
  \definecolor{lcyan}{RGB}{223,255,252}
  \definecolor{black_opac}{RGB}{0,0,0}
  \tikzset{%
  prefix node name/.code={%
    \tikzset{%
      name/.code={\edef\tikz@fig@name{#1 ##1}}
    }%
  }%
}
\definecolor{lightred}{RGB}{241, 225, 222}
\definecolor{color1}{rgb}{0.1,0.498039215686275,0.9549019607843137}
\definecolor{alizarin}{rgb}{0.82, 0.1, 0.26}
\definecolor{antiquewhite}{rgb}{0.98, 0.92, 0.84}
\definecolor{azure}{rgb}{0.94, 1.0, 1.0}
\definecolor{offwhite}{rgb}{0.98, 0.95, 0.95}
\definecolor{pigment}{rgb}{0.2, 0.2, 0.6}
\newcommand{\stsp}[0]{S}
\newcommand{\init}[0]{I}
\newcommand{\trans}[0]{\rightarrow}
\newcommand{\obs}[1]{\llangle #1 \rrangle}
\newcommand{\quot}[0]{\mathcal{M}\hspace{-0.2pt}/\hspace{-0.2pt}_{\simeq}}
\newcommand{\qstsp}[0]{S\hspace{-1pt}/\hspace{-1pt}_{\simeq}}
\newcommand{\qinit}[0]{I\hspace{-1pt}/\hspace{-1pt}_{\simeq}}
\newcommand{\qtrans}[0]{\rightarrow\hspace{-3.5pt}/\hspace{-1.5pt}_\simeq\hspace{1pt}}
\newcommand{\qobs}[1]{\llangle #1 \rrangle\hspace{-1pt}/_\simeq}
\def\techreport{}
\begin{document}
\title{Branching Bisimulation Learning}
\titlerunning{Branching Bisimulation Learning}

\author{Alessandro Abate\inst{1}
%\orcidID{0000-0002-5627-9093} 
\and
Mirco Giacobbe\inst{2}
%\orcidID{0000-0001-8180-0904}
\and\\
Christian Micheletti\inst{3}
%\orcidID{0009-0007-6255-6468}
\and
Yannik Schnitzer\inst{1}
%\orcidID{0000-0001-7406-3440}
}
\authorrunning{A. Abate et al.}
% First names are abbreviated in the running head.
% If there are more than two authors, 'et al.' is used.
%
\institute{University of Oxford, Oxford, UK \\
\email{\{alessandro.abate,yannik.schnitzer\}@cs.ox.ac.uk}\\
 \and
University of Birmingham, Birmingham, UK \\
\email{m.giacobbe@bham.ac.uk} \and
University of Padua, Padua, Italy\\
\email{christian.micheletti@studenti.unipd.it}
}

\maketitle              
\begin{abstract}
We introduce a bisimulation learning algorithm for non-de\-ter\-mi\-nis\-tic transition systems. 
We generalise bisimulation learning to systems with bounded branching and extend its applicability to model checking branching-time temporal logic,
while previously it was limited to deterministic systems and model checking linear-time properties. 
Our method computes a finite stutter-insensitive bisimulation quotient of the system under analysis, represented as a decision tree. 
We adapt the proof rule for well-founded bisimulations to an iterative procedure that trains candidate decision trees from sample transitions of the system, and checks their validity over the entire transition relation using SMT solving. 
This results in a new technology for model checking CTL* without the next-time operator. 
Our technique is sound, entirely automated, and yields abstractions that are succinct and effective for formal verification and system diagnostics.   
We demonstrate the efficacy of our method on diverse benchmarks comprising concurrent software, communication protocols and robotic scenarios. 
Our method performs comparably to mature tools in the special case of LTL model checking, and outperforms the state of the art 
in CTL and CTL* model checking for systems with very large and countably infinite state space.
\keywords{Data-driven Verification \and Abstraction \and Non-deterministic Systems \and Stutter-insensitive Bisimulations \and CTL* Model Checking}
\end{abstract}
\section{Introduction}
Bisimulation establishes equivalence between transition systems, ensuring they exhibit identical observable behaviour across all possible computation trees. It captures not only the linear traces of system interactions but also the branching structure of their potential evolutions, making it a powerful characterisation for the abstraction of systems. 
This enables the efficient analysis of complex concrete systems by reducing them to simpler abstract systems 
that capture the behaviour that is essential for model checking a formal specification.
When the concrete system and its abstract counterpart are in a bisimulation relation, their observable branching behaviour is indistinguishable. As a result, model checking linear- and branching-time temporal logic yield the same result on the abstract and the concrete system, even in the presence of non-determinism.

Algorithms for computing bisimulations have been developed extensively for systems with finite state spaces. The Paige-Tarjan algorithm for partition refinement laid the foundations for the automated constructions of bisimulation relations and their respective quotients~\cite{DBLP:journals/siamcomp/PaigeT87}. Partition refinement is the state of the art for this purpose, and has lent itself to extension towards on-the-fly quotient construction and symbolic as well as parallel implementations~\cite{DBLP:conf/stoc/LeeY92,DBLP:conf/cav/BouajjaniFH90,DBLP:conf/cav/LeeR94,DBLP:journals/sttt/DijkP18,DBLP:conf/facs2/0001GHHW21}. 
These produce the coarsest bisimulation quotient, namely the most succinct representation possible of a bisimulation. Yet, this  
can be prohibitively expensive to compute for systems with large state spaces and complex arithmetic pathways, both when using explicit-state and symbolic algorithms. Consequently, model-checking techniques based on the partition refinement algorithm are typically restricted to systems with smaller state spaces and simpler transition logic.

Counterexample-guided abstraction refinement (CEGAR) has enabled the incremental construction 
of abstract systems using satisfiability modulo theory (SMT) solvers, benefitting from their ability to reason effectively 
over arithmetic constraints~\cite{DBLP:conf/cav/ClarkeGJLV00,DBLP:series/faia/BarrettSST21}. Modern software, hardware and cyber-physical systems are rich in arithmetic operations and have often very large state spaces. As a result, significant progress in software model checking and the verification of cyber-physical systems has been driven by advancements in CEGAR~\cite{DBLP:conf/popl/HenzingerJMS02,DBLP:journals/sttt/BeyerHJM07,DBLP:conf/cav/BeyerK11,tacas14:kojak,theta-fmcad2017,DBLP:conf/cav/KahsaiRSS16}, 
specifically designed to compute simulation quotients to prove safety properties~\cite{DBLP:conf/cav/GrafS97}. 
However, for model checking liveness properties and linear-time logic, CEGAR requires non-trivial adaptation~\cite{DBLP:conf/sas/CookPR05,DBLP:conf/popl/CookGPRV07,DBLP:conf/cav/PrabhakarS16}, 
and branching-time logic is entirely out of scope for simulation quotients.

Bisimulation learning has introduced an incremental approach to computing bisimulation quotients~\cite{DBLP:conf/cav/AbateGS24}, where 
information about the system is learned from counterexample models---as happens in counterexample-guided inductive synthesis (CEGIS)~\cite{DBLP:conf/asplos/Solar-LezamaTBSS06}---as opposed to counterexample proofs---as happens in CEGAR~\cite{DBLP:conf/popl/HenzingerJMM04,DBLP:conf/cav/McMillan06}. 
Bisimulation learning relies on a parameterised representation of a quotient, such as a decision tree, 
and trains its parameters to {\em fit} a bisimulation quotient over sampled transitions of the system. 
Then, an adversarial component, such as an SMT solver, is used to check whether the quotient satisfies the bisimulation property 
across the entire state space and, upon a negative answer, propose counterexample transitions for further re-training of the quotient parameters. 
While in principle bisimulation relations preserve branching-time logic and support non-determinism, %up until today 
bisimulation learning was limited to deterministic systems and linear-time logic specifications~\cite{DBLP:conf/cav/AbateGS24,DBLP:conf/birthday/AbateGRS25}. 

We introduce a new, generalised bisimulation learning algorithm. 
Our result builds upon the proof rule for well-founded bisimulation~\cite{DBLP:conf/fsttcs/Namjoshi97}, 
which we specialise to systems  with bounded branching and 
integrate it within a bisimulation learning algorithm for non-deterministic systems.
While well-founded bisimulations were originally developed for interactive theorem proving, 
our result enables their fully automated construction.
We represent finite stutter-insensitive bisimulation quotients 
of non-deterministic systems, and enable the effective abstraction of systems with very large or countably infinite state space. 

Stutter-insensitive bisimulations are indistinguishable to an external observer that cannot track intermediate
transitions lacking an observable state change in a system. 
It is a standard result that stutter-insensitive bisimulation is an abstract semantics 
for $\ctln$---the branching-time logic $\ctls$ without the next-time operator~\cite{DBLP:journals/jacm/NicolaV95}. 
In other words, when the abstract system corresponds to a stutter-insensitive bisimulation quotient and distinguishes at least the atomic 
propositions of the $\ctln$ formula $\phi$,
then the answers to model checking $\phi$ on the abstract system and the concrete system agree. 

We demonstrate the efficacy of our method on a standard set of benchmarks for the formal verification of finite and 
infinite state systems against linear-time and branching-time temporal logic. 
We compare the runtime performance of our prototype with mature tools for the termination analysis of 
software (CPAChecker and Ultimate), LTL and CTL model checking finite-state systems (nuXmv), 
LTL model checking infinite-state systems (nuXmv and UltimateLTL), and 
CTL* model checking infinite-state systems (T2). 
Our method performs comparably to mature tools for termination analysis, finite-state model checking, 
and LTL model checking infinite-state systems, while yielding 
superior results in branching-time model checking infinite-state systems. 
Overall, our technology addresses the general CTL* model checking problem for infinite-state systems,
performs comparably to specialised tools for linear-time properties, 
and establishes a new state of the art in branching-time model checking. 

Our contribution is threefold. 
First, we generalise bisimulation learning to non-de\-ter\-mi\-nis\-tic transition systems, 
leveraging the proof rule for well-founded bisimulations which we fully automate.  
Second, we introduce a new model checking algorithm for infinite-state systems against CTL* without the next-time operator.
Third, we demonstrate the efficacy of our general approach on standard benchmarks which not only compares favourably 
with the state of the art but also provides more informative results. 
Our method produces succinct stutter-insensitive bisimulation quotients of the system under analysis. In conjunction with a fix-point based model checking algorithm, this yields the exact initial conditions for which the concrete system satisfies a given specification. 

\section{Model Checking}
\label{sec:modelchecking}
We consider the problem of determining whether state transition systems with countable (possibly infinite) state space satisfy 
linear-time and branching-time temporal logic specifications. More precisely, we consider the model checking problem of non-deterministic transition systems with bounded branching with respect to linear-time and branching-time specifications expressed in $\ctls$ without the next-time operator.
This encompasses a broad variety of formal verification questions for software systems
with concurrency, reactive systems including synchronisation and communication protocols, 
as well as cyber-physical systems over finite or infinite discrete grid worlds.

\begin{definition}[Transition Systems]
    A transition system ${\cal M}$ consists of
    \begin{itemize}
	   \item a state space $\stsp$,
 	 \item an initial region $\init \subseteq \stsp$, and
	   \item a transition relation $\trans \subseteq \stsp \times \stsp$.
\end{itemize}
We consider transition systems that are non-blocking and have bounded branching. 
In other words, every state $s \in S$ has at least one and at most $k \in \bbbn$ successors, i.e., 
$0 < |\{ t \in S \colon s \trans t \}| \leq k$, where we say that $\cal M$ has $k$-bounded branching. 
We say that $\cal M$ is labelled when it additionally comprises
\begin{itemize}
    \item a set of atomic propositions $\Pi$ (the observables), and
    \item a labelling (or observation) function $\llangle \cdot \rrangle \colon S \to {\cal P}(\Pi)$.
\end{itemize}
A path $\pi = s_0s_1\dots$ of $\cal M$ is a sequence of states such that $s_i \trans s_{i+1}$, for all $i \geq 0$. 
Since transition systems are non-blocking, every path extends to infinite length. We denote the set of all infinite paths starting in state $s$ as $\text{Paths}(s)$.
    \label{def:ts}
\end{definition}

\begin{remark}[Bounded vs. Finite Branching] Bounded branching requires a constant $k$ that bounds the number of successors across the entire state space, whereas \emph{finite branching} is weaker, requiring every state to have finitely many successors, not imposing a constant upper bound. We note that bounded branching is a mild modelling restriction, which encompasses concurrency with finitely many processes, 
non-deterministic guarded commands and conditional choices, and non-deterministic variables or inputs with finite static domain. 
It excludes non-deterministic variables with infinite or state-dependent domain size.
\qed 
\end{remark}

We consider the branching-time temporal logic $\text{CTL}^*$ without the next-time operator ($\ctln$) 
as our formal specification language for the temporal behaviour of systems~\cite{DBLP:books/daglib/baierkatoen,DBLP:conf/popl/EmersonH83}. 
This subsumes and generalises 
Linear Temporal Logic (LTL)~\cite{DBLP:conf/focs/Pnueli77} and Computation Tree Logic~(CTL)~\cite{DBLP:conf/lop/ClarkeE81} (excluding the next-time operator), expressing both linear-time and branching-time properties. The $\ctln$ formulae are 
constructed according to the following grammar:
\begin{gather*}
    \phi ::= \text{true} \mid p \in \Pi \mid \phi \wedge \phi \mid \neg \phi \mid \exists \psi\\
    \psi ::= \phi \mid \psi \wedge \psi \mid \neg \psi \mid \psi \until \psi.
\end{gather*}
The model checking problem for $\ctln$ is to decide whether transition system $\cal M$ satisfies a given $\ctln$ formula $\phi$. The satisfaction relation $\models$ for state formulae $\phi$ is defined over states $s\in S$ by: 
\begin{alignat*}{4}
	s & \models \text{true}\\
	s &\models p &&\text{iff }&& p \in \obs{s} \\
	s &\models \phi_1 \wedge \phi_2 \quad &&\text{iff } &&s \models \phi_1 \text{ and } s \models \phi_2\\
	s &\models \neg \phi \quad &&\text{iff } &&s \not\models \phi\\
    s &\models \exists \psi \quad &&\text{iff } && \exists \pi \in \paths{s} \colon \pi \models \psi.
\end{alignat*}
For a path $\pi$, the satisfaction relation $\models$ for path formulae $\psi$ is given by:
\begin{alignat*}{4}
	  \pi &\models \phi &&\text{iff }&& s_0 \models \phi \\
	\pi &\models \psi_1 \wedge \psi_2 \quad &&\text{iff } &&\pi \models \psi_1 \text{ and } \pi \models \psi_2\\
	\pi &\models \neg \psi \quad &&\text{iff } &&\pi \not\models \psi\\
    \pi &\models \psi_1 \until \psi_2 &&\text{iff } &&
    \exists k \in \mathbb{N} \colon \pi[k..] \models \psi_2 \text{ and}  \\
    &&&&&\forall 0 \leq l < k \colon \pi[l..] \models \psi_1,
\end{alignat*}
where for path $\pi = s_0s_1\dots$ and $i \in \mathbb{N}$, $\pi[i..] = s_is_{i+1}\dots$ denotes the suffix starting from index $i$. The satisfaction relation of a state formula $\phi$ is lifted to the entire transition system by requiring that every initial state must satisfy $\phi$:
$$ \mathcal{M} \models \phi \text{ iff } \forall s \in I \colon s \models \phi.$$
We also introduce the derived path operators \textit{"eventually"} $\lozenge$ and \textit{"globally"} $\square$. The formula $\lozenge \psi := \text{true} \until \psi$ states that $\psi$ must be true in some state on the path. The formula $\square \psi := \neg(\lozenge \neg \psi)$ requires that $\psi$ holds true in all states of the path. The universal quantification over paths $\forall \psi$ can be expressed as $\neg \exists \neg \psi$.
We do not include the \textit{"next-time"} operator $\nxt$ from full CTL\textsuperscript{*}, since we are interested in stutter-insensitive bisimulations, which do not preserve a system's stepwise behaviour, as expressed by $\nxt$.

\section{Stutter-Insensitive Bisimulations}

This section introduces the concept of abstraction, specifically that of stutter-insensitive bisimulation, which preserves all linear- and branching-time behaviour up to externally unobservable stutter steps, as captured by $\ctln$. 

\begin{definition}[Partitions]
     A partition on $\cal M$ is an equivalence relation $\simeq \subseteq S \times S$ on $S$,
     which defines the quotient space $S/_{\simeq}$ of pairwise-disjoint regions of $S$ whose union is $S$, i.e., 
     $\qstsp$ is the set of equivalence classes of $\simeq$. 
     A partition is called label-preserving (or observation-preserving) iff $s \simeq t \implies \obs{s} = \obs{t}$.
\end{definition}
A partition induces an abstract transition system -- the \emph{quotient}, which aggregates equivalent states and their behaviours into representative states.

\begin{definition}[Quotient]
     The quotient of $\cal M$ under the partition $\simeq$ is the transition system ${\cal M}/_\simeq$ with
\begin{itemize}  
    \item state space $\qstsp$,
    \item initial region $\qinit$ where $R \in \qinit$ iff $R \cap I \neq \emptyset$, and
\item transition relation $\qtrans$, with $R \qtrans Q$ iff either: 
\begin{enumerate}
    \item $R\neq Q$ and $\exists s \in R, t \in Q \colon s \rightarrow t$,
    \item $R = Q$ and $\forall s \in R \, \exists t \in R \colon s \rightarrow t$.
\end{enumerate}
\end{itemize}
If the partition $\simeq$ is label-preserving, the quotient further comprises a well-defined labelling function given by $\qobs{R} = \obs{s}$, for any $s\in R$.
\label{def:quotient}
\end{definition}

An \emph{abstract} state in the quotient represents an equivalence class of the underlying partition, inheriting the behaviours of included states. Depending on the partition, the quotient preserves temporal properties of the concrete system, such that model checking results carry over from the  quotient~\cite{DBLP:journals/siamcomp/PaigeT87}.

A prominent notion of equivalence on states is \emph{strong bisimilarity}~\cite{DBLP:journals/tcs/BrowneCG88,DBLP:journals/jacm/HennessyM85}. Strong bisimilarity demands that related states can replicate each other's transitions, with transitions leading to states that are themselves related. This ensures temporal equivalence in both linear- and branching-time, making it a suitable abstract semantics for full $\ctls$, including the \emph{next}-operator~\cite{DBLP:books/sp/Milner80}. However, preserving exact stepwise behavior limits the potential for state-space reduction when constructing the corresponding quotient. Alternatively, we consider the weaker notion of \emph{stutter-insensitivity}, which abstracts from exact steps in the concrete system that are externally unobservable. This results in potentially smaller quotients while preserving specifications expressible in $\ctln$.

\begin{definition}[Stutter-insensitive Bisimulation]
A label-preserving partition $\simeq$ is a stutter-insensitive bisimulation if, for all states $s, s' \in S$ with $s \simeq s'$ and paths $\pi\in \paths{s}$, there exists a path $\pi' \in \paths{s'}$, such that $\pi$ and $\pi'$ can be split into an equal number of non-empty finite subsequences $\pi = B_1B_2\dots$ and $\pi' = B_1'B_2'\dots$, for which it holds that $\forall i \geq 0 \, \forall t \in B_i, t' \in B_i' \colon t \simeq t'$.
\label{def:stuttBisim}
\end{definition}

Stutter-insensitive bisimulation requires that related states have outgoing paths composed of identical subsequences of equivalence classes, regardless of the lengths of these subsequences. This guarantees that related states are roots to computation trees that are externally indistinguishable up to exact step counts. Figure~\ref{fig:stuttbisim} illustrates this condition.

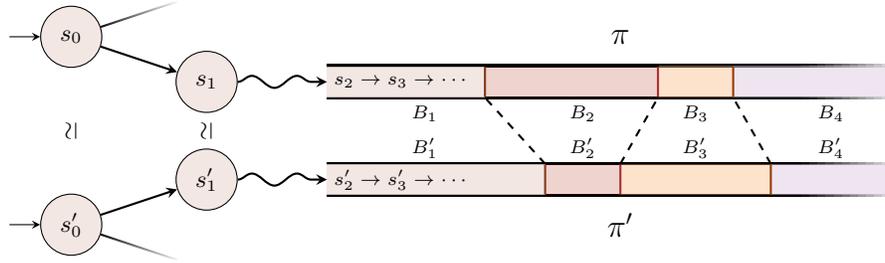
\begin{figure}[t]
    \centering
\begin{tikzpicture}[>=stealth, auto, on grid, node distance=1.8cm]

%-------------------------
%  ORIGINAL FIGURE
%-------------------------
\node[state, initial, initial text=, fill = lightred, fill opacity = 0.8, text opacity = 1] (A) at (0,0) {$s_0'$};
\node[state] (B) [right=of A, yshift=0.6cm, fill = lightred, fill opacity = 0.8, text opacity = 1] {$s_1'$};
\node[state, draw=none] (C) [right=of A, yshift=-0.6cm] {};

\draw[->, thick] (A) -- (B);
\draw[-, thick, path fading=south, draw=black] 
      (A) -- (C);

\coordinate (RectStart) at ($(B) + (1.5,0)$);
\coordinate (RectEnd)   at ($(RectStart) + (6.5,0)$);

\shade[shading=linefade, draw=none]
  (RectStart) + (0.1,  0.23) rectangle ++(8,  0.2);
\shade[shading=linefade, draw=none]
  (RectStart) + (0.1, -0.23) rectangle ++(8, -0.2);

   \fill[fill = lightred, fill opacity = 0.8]
    ($(RectStart) + (0.1,  -0.2)$)
    rectangle
    ($(RectStart) + (3,  0.2)$);
   \fill[BrickRed, opacity=0.21]
    ($(RectStart) + (3.01,  -0.2)$)
    rectangle
    ($(RectStart) + (4,  0.2)$);    
   \fill[Apricot, opacity=0.4]
    ($(RectStart) + (4.01,  -0.2)$)
    rectangle
    ($(RectStart) + (6,  0.2)$);    
\fill[Plum!20, opacity=0.5]
    ($(RectStart) + (6.01, -0.2)$)
    rectangle
    ($(RectStart) + (7.55, 0.2)$);

  \draw[thick, Sepia!80]
    ($(RectStart) + (3,  -0.2)$) -- ($(RectStart) + (3,  0.2)$);
  \draw[thick, Maroon]
    ($(RectStart) + (4.0,  -0.2)$) -- ($(RectStart) + (4.0,  0.2)$);  
  \draw[thick, RawSienna]
    ($(RectStart) + (6,  -0.2)$) -- ($(RectStart) + (6,  0.2)$);

    \coordinate (Purple1) at  ($(RectStart) + (3,  0.21)$);
    \coordinate (Blue1) at  ($(RectStart) + (4,  0.21)$);
    \coordinate (Plum1) at  ($(RectStart) + (6,  0.21)$);

\draw[-stealth, thick,
      decorate,
      decoration={snake, amplitude=0.8mm, segment length=6mm}]
      (B) -- ($(RectStart) + (0.1,0)$);

      \node[] (pi1) at ($(RectStart) + (4.0,-0.6)$) {\large$\pi'$};

      \node[] (block1) at ($(RectStart) + (1.4,0.42)$) {\scriptsize$B_1'$};
    \node[] (block2) at ($(RectStart) + (3.5,0.42)$) {\scriptsize$B_2'$};
    \node[] (block3) at ($(RectStart) + (5,0.42)$) {\scriptsize$B_3'$};
    \node[] (block4) at ($(RectStart) + (6.8,0.42)$) {\scriptsize$B_4'$};

%--------------------------------
%  VERTICALLY "FLIPPED" COPY ABOVE
%--------------------------------
\begin{scope}[yshift=2.5cm]
  \node[state, initial, initial text=,fill = lightred, fill opacity = 0.8, text opacity = 1] (A2) at (0,0) {$s_0$};
  \node[state, fill = lightred, fill opacity = 0.8, text opacity = 1] (B2) [right=of A2, yshift=-0.6cm] {$s_1$};
  \node[state, draw=none] (C2) [right=of A2, yshift=0.6cm] {};

  \draw[->, thick] (A2) -- (B2);
  \draw[-, thick, path fading=north, draw=black] 
        (A2) -- (C2);

  \coordinate (RectStart2) at ($(B2) + (1.5,0)$);
  \coordinate (RectEnd2)   at ($(RectStart2) + (6.5,0)$);

  \shade[shading=linefade, draw=none]
    (RectStart2) + (0.1,  0.23) rectangle ++(8,  0.2);
  \shade[shading=linefade, draw=none]
    (RectStart2) + (0.1, -0.23) rectangle ++(8, -0.2);

  % --- Vertical partitions & color fill in the TOP rectangle (upper figure) ---

   \fill[fill = lightred, fill opacity = 0.8]
    ($(RectStart2) + (0.1,  -0.2)$)
    rectangle
    ($(RectStart2) + (2.2,  0.2)$);
   \fill[BrickRed, opacity=0.21]
    ($(RectStart2) + (2.21,  -0.2)$)
    rectangle
    ($(RectStart2) + (4.5,  0.2)$);    
   \fill[Apricot, opacity=0.4]
    ($(RectStart2) + (4.51,  -0.21)$)
    rectangle
    ($(RectStart2) + (5.5,  0.21)$);    
\fill[Plum!20, opacity=0.5]
    ($(RectStart2) + (5.5, -0.21)$)
    rectangle
    ($(RectStart2) + (7.55, 0.21)$);

    %\fill[Plum!20, opacity=0.7]
    %\fill[teal!20, opacity=0.5]

  \draw[thick, Maroon]
    ($(RectStart2) + (4.5,  -0.2)$) -- ($(RectStart2) + (4.5,  0.2)$);  
  \draw[thick, Sepia!80]
    ($(RectStart2) + (2.2,  -0.2)$) -- ($(RectStart2) + (2.2,  0.2)$);
  \draw[thick, RawSienna]
    ($(RectStart2) + (5.5,  -0.2)$) -- ($(RectStart2) + (5.5,  0.2)$);
    \coordinate (Purple2) at ($(RectStart2) + (2.2,  -0.21)$);
    \coordinate (Blue2) at  ($(RectStart2) + (4.5,  -0.21)$);
    \coordinate (Plum2) at  ($(RectStart2) + (5.5,  -0.21)$);
  % --- Vertical partitions & color fill in the BOTTOM rectangle (upper figure) ---

\draw[-stealth, thick,
      decorate,
      decoration={snake, amplitude=0.8mm, segment length=6mm}]
      (B2) -- ($(RectStart2) + (0.1,0)$);

      \node[] (pi2) at ($(RectStart2) + (4.0,0.6)$) {\large$\pi$};

      \node[] (block1) at ($(RectStart2) + (1.4,-0.42)$) {\scriptsize$B_1$};
    \node[] (block2) at ($(RectStart2) + (3.5,-0.42)$) {\scriptsize$B_2$};
    \node[] (block3) at ($(RectStart2) + (5,-0.42)$) {\scriptsize$B_3$};
    \node[] (block4) at ($(RectStart2) + (6.8,-0.42)$) {\scriptsize$B_4$};
\end{scope}

\draw[dashed, thick] (Purple1) -- (Purple2);
\draw[dashed, thick] (Blue1) -- (Blue2);
\draw[dashed, thick] (Plum1) -- (Plum2);

\node[] (states1) at ($(RectStart) + (1.1,0)$) {\scriptsize$s_2' \rightarrow s_3'\rightarrow \cdots$};
\node[] (states2) at ($(RectStart2) + (1.1,0)$) {\scriptsize$s_2 \rightarrow s_3\rightarrow \cdots$};

\node[] (sim1) at ($(A2) + (0,-1.25)$) {\rotatebox{90}{$\simeq$}};
\node[] (sim2) at ($(B2) + (0,-0.65)$) {\rotatebox{90}{$\simeq$}};

\end{tikzpicture}

    \caption{States related under stutter-insensitive bisimulation, with examples of matching paths, consisting of identical subsequences of equivalence classes.}

    \label{fig:stuttbisim}
\end{figure}

\begin{remark}
The case distinction for the quotient transition relation in Definition~\ref{def:quotient} is needed to prevent spurious self-loops in the quotient system, which may arise from unobservable stuttering steps in the original system~\cite{DBLP:conf/fsttcs/Namjoshi97}. We have to exclude unobservable intra-class transitions from the quotient, except when they reflect diverging behaviour, i.e., when all states within a class can remain in the class indefinitely (Case 2). Figure~\ref{fig:bisimquotientex} shows an example system and its stutter-insensitive bisimulation quotient. \qed 
\end{remark}
\tikzset{every loop/.style={-stealth, thick}}
\tikzset{
  every initial by arrow/.style={-stealth, thick}
}
\begin{figure}[ht]

\centering
\hspace*{4em}
\scalebox{0.99}{
\begin{tikzpicture} [node distance = 1.8cm, on grid, auto]
\begin{scope}[name prefix =G1, xshift= 0cm]
    \node (q0) [state, initial,initial text = {},initial above, fill = lightred, fill opacity = 0.8, text opacity = 1] {$s_0$};
    \node (q1) [state,,initial text = {},below left = of q0, fill = lightred, fill opacity = 0.8, text opacity = 1] {$s_1$};
    \node (q2) [state, below right = of q0, fill = BrickRed, fill opacity=0.35,, text opacity = 1] {$s_2$};
    \node (q3) [state, below right = of q1,fill = BrickRed, fill opacity=0.35,, text opacity = 1] {$s_3$};
    \node (q4) [state, below = of q3,fill = Apricot, fill opacity = 0.5, text opacity = 1, yshift = 0.2cm] {$s_4$};
       
    % Adding labels
    \node [left of=q0, xshift=1cm] {$\{a\}$};
    \node [left of=q1, xshift=1cm] {$\{a\}$};
    \node [right of=q2, xshift=-1cm] {$\{b\}$};
    \node [left of=q3, xshift=1cm] {$\{b\}$};
    \node [left of=q4, xshift=1cm] {$\{c\}$};
       
    \node [below of=q4, yshift = 1cm, xshift = 1.55cm] {\parbox{0.3\linewidth}{{\large{$\cal M$}}\label{subfig:a}}};
       
    \path [-stealth, thick]
        (q0) edge[]  (q1)
        (q1) edge [] (q3)
        (q3) edge [bend left, shorten >= 4mm, out =40] (q2)
        (q2) edge [bend left, shorten >= 4mm, out = 40] (q3)
        (q2) edge [] (q0)
        (q3) edge [] (q4)
        (q4) edge [loop right, -stealth, thick]  node {$ $}();
\end{scope}

\begin{scope}[xshift = 5cm, name prefix =G2, yshift = -0.4cm]
    \node (q0) [state, initial,initial text = {},initial above, fill = lightred, fill opacity = 0.8, text opacity = 1] {$R$};
    \node (q2) [state, below = of q0, fill = BrickRed, fill opacity=0.35,, text opacity = 1] {$Q$};
    \node (q4) [state, below = of q2,fill = Apricot, fill opacity = 0.5, text opacity = 1, yshift = 0cm] {$P$};
       
    % Adding labels
    \node [left of=q0, xshift=1cm] {$\{a\}$};
    \node [left of=q2, xshift=1cm] {$\{b\}$};
    \node [left of=q4, xshift=1cm] {$\{c\}$};
       
    \node [below of=q4, yshift = 0.8cm, xshift = 1.55cm] {\parbox{0.3\linewidth}{{\large{${\cal M}/_\simeq $}}\label{subfig:a2}}};
       
    \path [-stealth, thick]
        (q0) edge [bend left, shorten >= 4mm, out = 40] (q2)
        (q2) edge [bend left, shorten >= 4mm, out = 40] (q0)
        (q2) edge [] (q4)
        (q4) edge [loop right, -stealth, thick]  node {$ $}()
        (q2) edge [loop right, -stealth, thick]  node {$ $}();
\end{scope}
\end{tikzpicture}
}

\caption{A system $\cal M$ and a possible stutter-insensitive bisimulation quotient ${\cal M}/_\simeq $. The abstract state $R$ lacks a self-loop despite an intra-class transition, as all represented concrete states eventually transition to $Q$. Conversely, $Q$ has a self-loop since its concrete states can remain within the class indefinitely.}
\label{fig:bisimquotientex}

\end{figure}
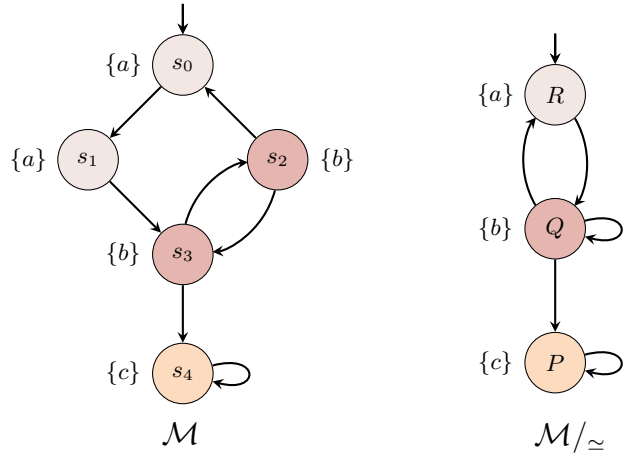

Since states related by a stutter-insensitive bisimulation are indistinguishable in linear- and branching-time up to exact stepwise behaviour, the model checking question for $\ctln$, as introduced in Section~\ref{sec:modelchecking}, yields the same result for a system and its potentially much smaller stutter-insensitive bisimulation quotient. 

\begin{theorem}[\hspace{-0.1pt}\protect{\cite[Theorem 4.2]{DBLP:journals/tcs/BrowneCG88}}]
    \label{thm:divsens}
    Let ${\cal M}$ be a labelled transition system and let $\simeq$ be a label-preserving stutter-insensitive bisimulation on ${\cal M}$. For every $\ctln$ formula $\phi$, it holds that ${\cal M} \models \phi$ if and only if ${\cal M}/_\simeq \models \phi$. \qed 
\end{theorem}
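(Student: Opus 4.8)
The plan is to prove the equivalence in three stages: (1) exhibit a stutter-insensitive bisimulation linking $\cal M$ to its quotient ${\cal M}/_\simeq$; (2) show by structural induction that stutter-bisimilar states satisfy exactly the same $\ctln$ formulae; (3) lift this from states to the level of initial regions. The technical core — and the step I expect to be the main obstacle — is stage~(1), since it is here that the Case-1/Case-2 split in Definition~\ref{def:quotient} must be shown to be exactly calibrated, and where the bisimulation hypothesis on $\simeq$ is genuinely used. Concretely, I would work in the disjoint union ${\cal M} \uplus {\cal M}/_\simeq$ with the relation $\approx$ that extends $\simeq$ by relating every state $s$ to its class $[s] \in \qstsp$ (and each class to itself); label-preservation of $\simeq$ makes $\approx$ label-consistent, and one checks it is an equivalence. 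The content is a two-way path correspondence. Forward: a concrete path $\pi = s_0 s_1 \cdots$ is projected to the sequence of visited classes with consecutive repetitions collapsed; each genuine class change $[s_i] \neq [s_{i+1}]$ with $s_i \rightarrow s_{i+1}$ gives a Case-1 quotient transition, and if $\pi$ eventually remains inside a single class $R$, then — because $\simeq$ is a stutter-insensitive bisimulation applied to that suffix — every state of $R$ can stay in $R$, so $R$ carries the Case-2 self-loop needed to extend the projected path forever; the maximal constant-class segments of $\pi$ are then $\approx$-matched, block by block, by the corresponding singletons/loops on the quotient side. Backward: a quotient path $R_0 \qtrans R_1 \qtrans \cdots$ (with $s_0 \in R_0$) is lifted iteratively — a Case-1 step is realised by invoking Definition~\ref{def:stuttBisim} at the current concrete state to obtain a finite segment that stutters inside $R_i$ and then enters $R_{i+1}$, and a Case-2 self-loop is realised by a single in-class transition, which exists by the very condition defining Case~2; concatenating the segments yields a concrete path whose blocks $\approx$-match those of the quotient path.

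\textbf{Stage 2: $\approx$-related states agree on $\ctln$.} I would prove, by mutual structural induction on formulae, that $u \approx v$ implies $u \models \phi \iff v \models \phi$ for every $\ctln$ state formula $\phi$, and that $\approx$-matching paths $\pi, \pi'$ satisfy $\pi \models \psi \iff \pi' \models \psi$ for every $\ctln$ path formula $\psi$. The atomic case $p \in \Pi$ is exactly label-consistency of $\approx$; the Boolean cases are immediate; the case $\exists \psi$ follows from stage~(1), since a witness path on one side has an $\approx$-matching witness path on the other, on which $\psi$ holds by the induction hypothesis (symmetry of $\approx$ covers both directions). For path formulae, $\psi \equiv \phi$ reduces to the state-formula claim applied to the initial states, which are $\approx$-related, and the only delicate case is $\psi_1 \until \psi_2$. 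Here one uses that $\approx$-matching is preserved by passing to corresponding suffixes: if $\pi = B_1 B_2 \cdots$ and $\pi' = B_1' B_2' \cdots$ match and an index $k$ lies in block $B_m$, then $\pi[k..]$ and $\pi'[k'..]$ still match for every index $k'$ in $B_m'$. Hence a $\psi_2$-witness $k$ of $\pi$ lying in $B_m$ is matched by taking the first index of $B_m'$ as a $\psi_2$-witness for $\pi'$ (induction hypothesis on $\psi_2$), while every strictly earlier suffix of $\pi'$ starts in some block $B_n'$ with $n < m$, hence $\approx$-matches an earlier suffix of $\pi$ on which $\psi_1$ holds (induction hypothesis on $\psi_1$). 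This alignment is precisely what breaks for the next-time operator and motivates its exclusion from $\ctln$.

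\textbf{Stage 3: transfer to initial regions.} Finally I would combine the two stages. If ${\cal M} \models \phi$ and $R \in \qinit$, pick $s \in R \cap I$ (non-empty by definition of $\qinit$); then $s \models \phi$, and $s \approx R$ gives $R \models \phi$ by stage~(2), so ${\cal M}/_\simeq \models \phi$. Conversely, if ${\cal M}/_\simeq \models \phi$ and $s \in I$, then $[s] \in \qinit$ (as $s \in [s] \cap I$), hence $[s] \models \phi$, and $s \approx [s]$ gives $s \models \phi$, so ${\cal M} \models \phi$. The cardinality of the quotient plays no role, so the argument applies equally to the infinite-state setting of interest.
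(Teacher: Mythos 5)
Your argument is correct. Note first that the paper itself gives no proof of this statement: it is imported verbatim from Browne--Clarke--Grumberg (the cited Theorem~4.2), so there is no in-paper proof to compare against. Your three-stage plan is the standard route to that result, and you have supplied the one piece that the citation alone does not cover, namely Stage~(1): the cited theorem speaks of stuttering-equivalent states agreeing on formulae, whereas applying it to $\quot$ requires showing that $s$ and $[s]$ are stutter-bisimilar in ${\cal M}\uplus\quot$. You correctly identify this as the delicate point and handle both of its genuinely non-trivial sub-cases: (i) in the forward direction, when a concrete path eventually stabilises in a class $R$, you invoke the bisimulation hypothesis on that suffix to conclude that \emph{every} state of $R$ has a successor in $R$, which is exactly the Case-2 condition of Definition~\ref{def:quotient} needed to justify the self-loop; (ii) in the backward direction, a Case-1 edge $R_i\qtrans R_{i+1}$ only witnesses \emph{some} $u\in R_i$ with a successor in $R_{i+1}$, and you correctly transport this to the current concrete state via Definition~\ref{def:stuttBisim} applied to a path through that witness. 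The Until case of Stage~(2) is also handled correctly — choosing the \emph{first} index of the matching block $B_m'$ as the $\psi_2$-witness is precisely what makes the $\forall l'<k'$ obligation fall back onto strictly earlier blocks. I see no gap.
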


\section{Well-founded Bisimulations with Bounded Branching}

We revisit \emph{well-founded} bisimulation~\cite{DBLP:conf/fsttcs/Namjoshi97, DBLP:conf/cav/ManoliosNS99} as an alternative formulation of stutter-insensitive bisimulation. This notion builds on entirely local conditions for states and their transitions that ensure the existence of matching infinite paths. We show how this definition generalises the deterministic version used in bisimulation learning \cite{DBLP:conf/cav/AbateGS24} and characterises stutter-insensitive bisimulations for non-deterministic systems. By specialising this notion to systems with bounded branching, we are able to effectively encode the local conditions into quantifier-free first order logic formulas over states and their successors, which we leverage in the design of a counterexample-guided bisimulation learning approach. To the best of our knowledge, this is the first approach for the fully automatic computation of well-founded bisimulations on potentially infinite-state non-deterministic systems, which was originally designed as a proof rule for theorem proving and relied on manually crafted relations. If terminated successfully, the novel bisimulation learning procedure generates a finite quotient system, enabling state-of-the-art finite-state model checkers to verify $\ctln$ properties whose results directly carry over to the original, possibly infinite system.

The proof rule for well-founded bisimulations uses \emph{ranking functions} over a well-founded set, which decrease along stuttering steps and eventually lead to a matching transition. Since well-founded sets preclude infinite descending sequences, this guarantees the finiteness of stuttering and ensures the existence of matching infinite paths between related states based solely on local conditions.

\begin{theorem}[\!\!\protect{\cite[Theorem 1]{DBLP:conf/fsttcs/Namjoshi97}}]
\label{thm:namjoshi}
    Let ${\cal M}$ be a transition system and let $\simeq$ be a label-preserving partition on $\cal M$. Suppose there exists a function $r \colon S \times S \to \mathbb{N}$ such that, for every $s,s',t \in S$ with $s \simeq t$ and $s \trans s'$, the following holds:
    \begin{align}
        &\exists t' \in S \colon t \trans t' \land s' \simeq t' \; \lor \label{eq:first}\\
        &s \simeq s' \land r(s',s' )< r(s,s) \; \lor \label{eq:sec}\\
        &\exists t' \in S \colon t \trans t' \land t \simeq t' \land r(s',t') < r(s',t). \label{eq:third}
    \end{align}
    Then $\simeq$ is a stutter-insensitive bisimulation on $\mathcal{M}$. \qed
\end{theorem}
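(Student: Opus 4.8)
The plan is to show that the local conditions in Theorem~\ref{thm:namjoshi} imply the global path-matching condition of Definition~\ref{def:stuttBisim}. Fix $s \simeq t$ and a path $\pi = s_0 s_1 s_2 \dots \in \paths{s}$ with $s_0 = s$. I would construct the matching path $\pi' \in \paths{t}$ incrementally, maintaining the invariant that after processing a finite prefix of $\pi$ I have built a finite prefix of $\pi'$ ending in some state $u$ with $s_i \simeq u$ for the ``current'' index $i$ of $\pi$, and that the blocks produced so far on both sides are aligned and class-consistent. The key is to show that the region traversed by $\pi$ between two consecutive ``real'' moves is finite, so that each block $B_j$ is finite and non-empty.

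The core of the argument is a \emph{well-foundedness} claim: starting from $s_i$, consider the sequence of successive states $s_i, s_{i+1}, \dots$ along $\pi$. For each step $s_{i+\ell} \trans s_{i+\ell+1}$, by hypothesis one of \eqref{eq:first}, \eqref{eq:sec}, \eqref{eq:third} holds with the current matched partner $u_\ell$ in place of $t$ (maintaining $s_{i+\ell} \simeq u_\ell$). If \eqref{eq:first} ever holds, we extend $\pi'$ by one real step to a $u'$ with $s_{i+\ell+1} \simeq u'$, close the current pair of blocks, and start a new block. If instead \eqref{eq:sec} or \eqref{eq:third} holds, I argue the partner side either stays put (case \eqref{eq:sec}, where $s_{i+\ell} \simeq s_{i+\ell+1}$ and the partner $u_{\ell+1} := u_\ell$ is unchanged, with the rank $r(s_{i+\ell+1}, s_{i+\ell+1})$ strictly dropping) or takes an intra-class step on the partner side (case \eqref{eq:third}, where the partner moves $u_\ell \trans u_{\ell+1}$ with $u_\ell \simeq u_{\ell+1}$ and $r(s_{i+\ell+1}, u_{\ell+1}) < r(s_{i+\ell+1}, u_\ell)$). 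In both stuttering cases the pair $(s_{i+\ell}, u_\ell)$ stays within a single $\simeq$-class on both coordinates, and a suitable rank — I would use the lexicographic or simply combined measure built from $r(\cdot,\cdot)$ as it appears in \eqref{eq:sec} and \eqref{eq:third}, noting that \eqref{eq:sec} resets/decreases $r(s',s')$ while \eqref{eq:third} decreases $r(s',\cdot)$ with the first argument fixed — strictly decreases in $\mathbb{N}$. Since $\mathbb{N}$ is well-founded, only finitely many consecutive stuttering steps can occur before a step of type \eqref{eq:first} is forced; this bounds each block and guarantees $\pi'$ is infinite (it is non-blocking in any case, but this argument shows the blocks are non-empty and the construction never gets stuck in one block forever). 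Concatenating the blocks yields the required decomposition $\pi = B_1 B_2 \dots$, $\pi' = B_1' B_2' \dots$ with $\forall i\ \forall x \in B_i, x' \in B_i' \colon x \simeq x'$, because every state in $B_j$ is $\simeq$-equivalent to $s$-side block representative and likewise on the partner side, and the two representatives are $\simeq$-equivalent by construction.

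The main obstacle I anticipate is making the rank bookkeeping across the two stuttering cases \eqref{eq:sec} and \eqref{eq:third} genuinely well-founded \emph{simultaneously}: case \eqref{eq:sec} decreases $r(s',s')$ but leaves the partner untouched, while case \eqref{eq:third} decreases $r(s',t')$ but changes its second argument, and a naive single-coordinate measure need not decrease when the two cases interleave. The fix is to observe that within a stuttering segment the $\simeq$-class is fixed, and to define the measure on the pair $(s_{\text{cur}}, u_{\text{cur}})$ as, e.g., $r(s_{\text{cur}}, s_{\text{cur}})$ with ties — i.e. steps of type \eqref{eq:third} — broken by $r(s_{\text{cur}}, u_{\text{cur}})$; one checks that type-\eqref{eq:sec} steps strictly drop the first component (possibly changing the second arbitrarily, which is fine lexicographically) and type-\eqref{eq:third} steps keep the first component fixed (since $s_{\text{cur}}$ is unchanged) and strictly drop the second. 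A secondary technical point is handling the edge case where a single concrete step is simultaneously witnessed by several disjuncts — any choice works, with \eqref{eq:first} given priority so that blocks are closed as soon as possible — and verifying that the resulting blocks are each non-empty, which holds since every block contains at least its starting state.
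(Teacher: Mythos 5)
You should first note that the paper does not actually prove this theorem: it is imported from Namjoshi with a citation and closed without proof, and the text only supplies the informal intuition paragraph that follows it. Your construction --- build $\pi'$ incrementally, place a block boundary whenever disjunct~\eqref{eq:first} fires, treat \eqref{eq:sec} as a stutter on the $\pi$ side with the partner frozen and \eqref{eq:third} as a stutter on the partner side with the pending transition left unconsumed, and use well-foundedness of $\mathbb{N}$ to bound each stuttering phase --- is exactly that cited argument fleshed out. Your bookkeeping of the invariant $s_{\mathrm{cur}} \simeq u_{\mathrm{cur}}$ (preserved by transitivity of $\simeq$ in both stuttering cases), of block non-emptiness, and of the priority given to \eqref{eq:first} is all correct.

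The one concrete flaw is in the termination measure you finally commit to. You propose the lexicographic pair $\bigl(r(s_{\mathrm{cur}},s_{\mathrm{cur}}),\, r(s_{\mathrm{cur}},u_{\mathrm{cur}})\bigr)$ and claim that type-\eqref{eq:third} steps strictly drop the second component. They need not: condition~\eqref{eq:third} only guarantees $r(s',t') < r(s',t)$, a decrease of the rank whose \emph{first} argument is the pending successor $s'$, whereas your second component has first argument $s_{\mathrm{cur}} = s$, about which the hypothesis says nothing. The repair is the one your own earlier remark (``\eqref{eq:third} decreases $r(s',\cdot)$ with the first argument fixed'') already points to: take the second component to be $r(s_{\mathrm{next}}, u_{\mathrm{cur}})$, where $s_{\mathrm{next}}$ is the pending successor on $\pi$. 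Then a type-\eqref{eq:sec} step strictly drops the first component (and may perturb the second arbitrarily, which lexicographic order tolerates, since the pending successor changes), while a type-\eqref{eq:third} step leaves $s_{\mathrm{cur}}$ and $s_{\mathrm{next}}$ fixed and strictly drops the second. With that substitution the measure is genuinely decreasing on $\mathbb{N}^2$ under lexicographic order, each phase terminates in a \eqref{eq:first}-step, and the rest of your argument goes through.
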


In this work, we focus on ranking functions $r \colon S \times S \to \mathbb{N}$ mapping to the natural numbers, though all results extend to arbitrary well-founded sets. 

The intuition behind Theorem~\ref{thm:namjoshi} is as follows: For any states $s \simeq t$ with $s \rightarrow s'$, there are three possibilities. First, there may be an immediate matching transition for $t$ (Case~\eqref{eq:first}). If the first case does not apply, and $s \simeq s'$, the rank decreases (Case~\eqref{eq:sec}). Since $r$ is defined over a well-founded set bounded from below, this can happen only finitely many times. In the remaining case, where $s \not\simeq s'$, there must exist a transition $t \rightarrow t'$ such that $t \simeq t'$ and the rank decreases (Case~\eqref{eq:third}). Again, by the well-foundedness of $(\mathbb{N}, <)$, this can also happen only finitely many times. 
Thus, a state related to $s'$ is eventually reached from $t$ after at most finite stuttering in the same equivalence class~\cite{DBLP:conf/fsttcs/Namjoshi97}. This ensures the existence of matching infinite paths, as per Definition~\ref{def:stuttBisim}, based solely on local reasoning over states and their immediate transitions.

The notion of well-founded bisimulations in Theorem~\ref{thm:namjoshi} generalises the proof rule used in deterministic bisimulation learning~\cite{DBLP:conf/cav/AbateGS24}, which essentially consists of restricted sub-conditions of Cases~\eqref{eq:first} and~\eqref{eq:sec} and is thus limited to deterministic systems~\cite[Theorem 2]{DBLP:conf/cav/AbateGS24}. Notice that, since states in non-deterministic systems can have multiple outgoing transitions, the proof rule in Theorem~\ref{thm:namjoshi} incorporates quantification over successors. For the design of an effective and efficient counterexample-guided bisimulation learning approach to the synthesis of well-founded bisimulations and their quotients, it is crucial to express these conditions in a quantifier-free fragment of a decidable first-order theory~\cite{DBLP:conf/cav/AbateDKKP18}. In the deterministic case addressed in previous work~\cite{DBLP:conf/cav/AbateGS24}, this is straightforward, since each state has a single outgoing transition, so the conditions do not involve quantification. To address this in the non-deterministic case, instead, we leverage the assumption of bounded branching, which allows the quantification over successors to be reformulated as a finite disjunction.

For a transition system $\cal M$ that has $k$-bounded branching, the transition relation $\rightarrow$ can be expressed as the union of $k$ deterministic transition functions $\sigma_i \colon S \to S$, for $1 \leq i \leq k$:
\begin{equation}
    \rightarrow ~=~ \bigcup_{i=1}^k \sigma_i.
\end{equation}
We assume that every state has exactly $k$ successors. This assumption is without loss of generality, as states with fewer than $k$ successors can have their successors duplicated. With that, we can state a version of Theorem~\ref{thm:namjoshi} for transition systems with bounded branching that eliminates quantification over successors.

\begin{theorem}
    \label{thm:boundednamjoshi}
    Let ${\cal M}$ be a transition system that has $k$-bounded branching and let $\simeq$ be a label-preserving partition on $\cal M$. Suppose there exists a function $r \colon S \times S \to \mathbb{N}$ such that, for every $s,t \in S$ with $s \simeq t$, the following holds:
    \begin{align}
        \bigwedge_{i = 1}^k \Big( &\bigvee_{j = 1}^k \sigma_i(s) \simeq \sigma_j(t) \; \lor \label{eq:qffirst}\\
        &s \simeq \sigma_i(s) \land r\left(\sigma_i(s),\sigma_i(s)\right)< r(s,s) \; \lor \label{eq:qfsec}\\
        &\bigvee_{j = 1}^k t \simeq \sigma_j(t)  \land r(\sigma_i(s),\sigma_j(t)) < r(\sigma_i(s),t)\Big). \label{eq:qfthird}
    \end{align}
    Then $\simeq$ is a stutter-insensitive bisimulation on $\mathcal{M}$. 
\end{theorem}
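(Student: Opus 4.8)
The natural strategy is to reduce Theorem~\ref{thm:boundednamjoshi} to Theorem~\ref{thm:namjoshi}, which already states that the existence of a suitable ranking function $r$ implies $\simeq$ is a stutter-insensitive bisimulation. Thus it suffices to show that the quantifier-free condition~\eqref{eq:qffirst}--\eqref{eq:qfthird}, holding for all $s \simeq t$, implies the condition~\eqref{eq:first}--\eqref{eq:sec}--\eqref{eq:third} of Theorem~\ref{thm:namjoshi}, for the same $r$ (or a minor modification of it).

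First I would fix arbitrary $s, s', t \in S$ with $s \simeq t$ and $s \trans s'$, and aim to establish the disjunction~\eqref{eq:first}$\lor$\eqref{eq:sec}$\lor$\eqref{eq:third}. Since $\trans = \bigcup_{i=1}^k \sigma_i$ and every state has exactly $k$ successors, the transition $s \trans s'$ means $s' = \sigma_i(s)$ for some index $i \in \{1,\dots,k\}$. Now instantiate the hypothesis~\eqref{eq:qffirst}--\eqref{eq:qfthird} at this pair $(s,t)$ and look at the $i$-th conjunct of the big $\bigwedge_{i=1}^k$. That conjunct is itself a three-way disjunction, and I would argue case by case that each disjunct entails one of the three cases of Theorem~\ref{thm:namjoshi}, substituting $s' = \sigma_i(s)$ throughout: (i) if $\bigvee_{j=1}^k \sigma_i(s) \simeq \sigma_j(t)$ holds, pick the witnessing $j$ and set $t' = \sigma_j(t)$; since $t \trans t'$ and $s' = \sigma_i(s) \simeq \sigma_j(t) = t'$, this is exactly~\eqref{eq:first}. (ii) if $s \simeq \sigma_i(s) \land r(\sigma_i(s),\sigma_i(s)) < r(s,s)$ holds, then since $s' = \sigma_i(s)$ this reads $s \simeq s' \land r(s',s') < r(s,s)$, which is~\eqref{eq:sec}. (iii) if $\bigvee_{j=1}^k\, t \simeq \sigma_j(t) \land r(\sigma_i(s),\sigma_j(t)) < r(\sigma_i(s),t)$ holds, pick the witnessing $j$, set $t' = \sigma_j(t)$; then $t \trans t'$, $t \simeq t'$, and $r(s',t') < r(s',t)$, which is~\eqref{eq:third}.

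Having shown that the hypotheses of Theorem~\ref{thm:namjoshi} are met by the same $r$ and the same $\simeq$, the conclusion that $\simeq$ is a stutter-insensitive bisimulation follows immediately by invoking that theorem. I would also remark that the ``exactly $k$ successors'' normalisation (duplicating successors for states with fewer than $k$) is harmless: it does not change $\trans$ as a relation, nor the labelling, nor $\simeq$, so a stutter-insensitive bisimulation on the normalised system is one on the original, and the validity of~\eqref{eq:qffirst}--\eqref{eq:qfthird} is unaffected by which concrete enumeration $\sigma_1,\dots,\sigma_k$ of successors we fix (any choice consistent with $\trans = \bigcup_i \sigma_i$ works, since the conditions only assert existence over $j$ and the $i$-indexed conjunction ranges over all successors of $s$).

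The main obstacle is essentially bookkeeping rather than a deep difficulty: one must be careful that the substitution $s' = \sigma_i(s)$ is applied consistently and that the index $i$ selected to witness $s \trans s'$ is precisely the index of the conjunct one reads off from $\bigwedge_{i=1}^k$. A secondary subtlety worth a sentence is the direction of the reduction --- we need the quantifier-free form to \emph{imply} the quantified form (which we have, since finite disjunction implies existential quantification over the image, and the image of $s$ under $\sigma_1,\dots,\sigma_k$ is exactly the successor set of $s$); the converse is not needed here. No genuinely hard step arises, so the proof is short.
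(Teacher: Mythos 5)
Your proposal is correct and takes essentially the same route as the paper: both reduce to Theorem~\ref{thm:namjoshi} by observing that, under $k$-bounded branching, the finite conjunction over $i$ and disjunctions over $j$ encode universal and existential quantification over the successors of $s$ and $t$, so that~\eqref{eq:qffirst}--\eqref{eq:qfthird} instantiated at $s' = \sigma_i(s)$, $t' = \sigma_j(t)$ yields exactly~\eqref{eq:first}--\eqref{eq:third}. If anything, your per-transition case analysis is slightly more careful than the paper's proof, which verifies only the first disjunct explicitly and dispatches the other two as ``analogous.''
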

\begin{proof}
We show that Equation~\eqref{eq:qffirst} implies Equation~\eqref{eq:first} of Theorem~\ref{thm:namjoshi}. The remaining disjuncts can be treated analogously. Let \(s, t \in S\) such that $s \simeq t$ and
\begin{equation*}
    \bigwedge_{i = 1}^k \Big(\bigvee_{j = 1}^k \sigma_i(s) \simeq \sigma_j(t) \Big).
\end{equation*}
Since \(\mathcal{M}\) has \(k\)-bounded branching, both \(s\) and \(t\) have exactly \(k\) successors, represented by the \(k\) deterministic transition functions \(\sigma_i, 1 \leq i \leq k\). A conjunction over all \(k\) successors corresponds to a universal quantification over the successors, while a disjunction corresponds to an existential quantification. Consequently, the above expression can be rewritten as:
\begin{align*}
    &\forall s' \in S \colon \Big(s \to s' \implies \bigvee_{j = 1}^k s' \simeq \sigma_j(t)\Big)\\
    \Leftrightarrow \; & \forall s' \in S \colon \Big(s \to s' \implies \exists t' \in S \colon (t \to t' \land s' \simeq t')\Big),
\end{align*}
which is precisely Equation~\eqref{eq:first} of Theorem~\ref{thm:namjoshi}.
\end{proof}

\section{Bisimulation Learning for Non-Deterministic Systems}

We can now leverage the quantifier-free formulation of well-founded bisimulations from Theorem~\ref{thm:boundednamjoshi} to design a counterexample-guided bisimulation learning procedure for synthesising stutter-insensitive bisimulation quotients. The problem of identifying a suitable partition and ranking function that satisfy the conditions of a well-founded bisimulation is framed as a learning problem. To this end, we introduce the concept of \emph{state classifiers}.

\begin{definition}[State Classifier]
    A state classifier on a labelled transition system with state space $S$ is any function $f \colon S \to C$ that maps states to a finite set of classes $C$. It is label-preserving if $f(s) = f(t)$ implies 
    $\llangle s \rrangle = \llangle t \rrangle$. A classifier induces the partition $\simeq_f$ defined as $\simeq_f = \{(s,t) \mid f(s) = f(t)\}$, which is label-preserving iff $f$ is label-preserving. 
\end{definition}

We reduce the problem of identifying a suitable state classifier and ranking function to finding appropriate parameters for parametric function templates $f \colon \Theta \times S \to C$ and $r \colon H \times S \times S \to \mathbb{N}$. These templates define mappings that are fully determined by the parameters $\theta \in \Theta$ and $\eta \in H$, where $\Theta$ and $H$ are arbitrary parameter spaces. A state classifier template is label-preserving if the induced classifier is label-preserving for any parameterisation $\theta \in \Theta$. The specific parametric function templates used in our procedure are discussed in Section~\ref{sec:templates}. We write $f_\theta(s)$ for $f(\theta, s)$ and $r_\eta(s, s')$ for $r(\eta, s, s')$.

Since state classifiers over a finite set of classes correspond to finite partitions, Theorem~\ref{thm:boundednamjoshi} extends directly to this setting. Additionally, parametric function templates enable us to express the problem in first-order logic by shifting the focus from reasoning about the existence of suitable functions to reasoning about the existence of suitable parameters. We formalise this in the following corollary.

\begin{corollary}
    \label{col:boundednamjoshi}
    Let ${\cal M}$ be a labelled transition system with $k$-bounded branching, 
    $f \colon \Theta \times S \to C$ be a label-preserving state classifier template and $r \colon H \times S \times S \to \mathbb{N}$ be a ranking function template. 
    Let 
    \begin{align}
        \Psi(\theta,\eta, s,t) = \bigwedge_{i = 1}^k \Big( &\bigvee_{j = 1}^k f_\theta(\sigma_i(s)) = f_\theta(\sigma_j(t)) \; \lor \label{eq:cfirst}\\
        &f_\theta(s) =  f_\theta(\sigma_i(s))  \land r_\eta\left(\sigma_i(s) ,\sigma_i(s) \right)< r_\eta(s,s) \; \lor \label{eq:csec}\\
        &\bigvee_{j = 1}^k f_\theta(t) = f_\theta(\sigma_j(t)) \land r_\eta\left(\sigma_i(s) ,\sigma_j(t)\right) < r_\eta(\sigma_i(s) ,t)\Big). \label{eq:cthird}
    \end{align}
    Suppose that 
    \begin{equation}
        \exists \theta \in \Theta,\eta \in H \; \forall s,t \in S \colon f_\theta(s) = f_\theta(t) \implies \Psi(\theta,\eta, s,t).
    \end{equation}
    Then, $\simeq_f$ is a stutter-insensitive bisimulation on $\cal M$. \qed
\end{corollary}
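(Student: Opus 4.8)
The plan is to obtain Corollary~\ref{col:boundednamjoshi} as an immediate specialisation of Theorem~\ref{thm:boundednamjoshi}, by reading the parametric templates as concrete objects once their parameters are fixed. Concretely, from the hypothesis I would first fix witnesses $\theta \in \Theta$ and $\eta \in H$ such that $f_\theta(s) = f_\theta(t) \implies \Psi(\theta,\eta,s,t)$ holds for all $s,t \in S$. With $\theta$ fixed, $f_\theta \colon S \to C$ is an ordinary state classifier, and with $\eta$ fixed, $r_\eta \colon S \times S \to \mathbb{N}$ is an ordinary ranking function into the naturals.

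Next I would verify that $\simeq_f \,:=\, \simeq_{f_\theta} = \{(s,t) : f_\theta(s) = f_\theta(t)\}$ is a label-preserving partition in the sense required by Theorem~\ref{thm:boundednamjoshi}: it is the kernel of the function $f_\theta$, hence reflexive, symmetric and transitive, and therefore an equivalence relation; and since $f$ is a label-preserving template, $f_\theta(s) = f_\theta(t)$ implies $\llangle s \rrangle = \llangle t \rrangle$, so $\simeq_f$ is label-preserving. (Finiteness of $C$ is what additionally makes the resulting quotient finite, but it plays no role in the bisimulation claim itself, since Theorem~\ref{thm:boundednamjoshi} does not require the partition to be finite.)

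The core step is then to observe that, under the substitutions $s \simeq t \leftrightarrow f_\theta(s) = f_\theta(t)$ and $r \leftrightarrow r_\eta$, the formula $\Psi(\theta,\eta,s,t)$ in Equations~\eqref{eq:cfirst}--\eqref{eq:cthird} is, disjunct for disjunct, the conjunction-of-disjunctions appearing in Equations~\eqref{eq:qffirst}--\eqref{eq:qfthird} of Theorem~\ref{thm:boundednamjoshi}. Hence the assumed implication ``$f_\theta(s) = f_\theta(t) \implies \Psi(\theta,\eta,s,t)$ for all $s,t \in S$'' is precisely the hypothesis of Theorem~\ref{thm:boundednamjoshi} instantiated with the label-preserving partition $\simeq_f$ and the ranking function $r_\eta$. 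Applying that theorem yields that $\simeq_f$ is a stutter-insensitive bisimulation on ${\cal M}$, which is the claim.

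I do not expect a genuine obstacle here; the only point needing a little care is to spell out that the equality test $f_\theta(\cdot) = f_\theta(\cdot)$ over the class set $C$ coincides, as a binary relation on $S$, with membership in $\simeq_f$, so that the ``$\forall s,t \in S$'' quantification in the hypothesis of the corollary, together with its guard $f_\theta(s) = f_\theta(t)$, matches the ``$\forall s,t \in S$ with $s \simeq t$'' quantification of Theorem~\ref{thm:boundednamjoshi}. Once that identification is made explicit, the corollary is just Theorem~\ref{thm:boundednamjoshi} restated in the language of classifier and ranking-function templates.
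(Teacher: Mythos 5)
Your proposal is correct and matches the paper's (implicit) argument: the paper presents Corollary~\ref{col:boundednamjoshi} as a direct instantiation of Theorem~\ref{thm:boundednamjoshi}, obtained by fixing the witness parameters $\theta$ and $\eta$, observing that $\simeq_f$ is the label-preserving equivalence induced by the kernel of $f_\theta$, and noting that $\Psi(\theta,\eta,s,t)$ is syntactically the condition of the theorem under the substitutions $s \simeq t \leftrightarrow f_\theta(s)=f_\theta(t)$ and $r \leftrightarrow r_\eta$. Your added observation that finiteness of $C$ is irrelevant to the bisimulation claim itself (it only matters for finiteness of the quotient) is accurate and consistent with the paper.
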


Corollary~\ref{col:boundednamjoshi} reformulates the conditions of Theorem~\ref{thm:boundednamjoshi} in terms of parametric function templates. Branching bisimulation learning seeks to identify suitable parameters $\theta$ and $\eta$ that induce a valid well-founded bisimulation.

\subsection{Procedure}
\label{sec:procedure}

Our procedure involves two interacting components, the \emph{learner} and the \emph{verifier} that implement a CEGIS loop~\cite{DBLP:conf/cav/AbateDKKP18, DBLP:conf/fmcad/AlurBJMRSSSTU13}. The learner proposes candidate parameters that define a classifier and a ranking function, satisfying the conditions of Corollary~\ref{col:boundednamjoshi} over a finite set of sample states. The verifier then checks whether these induced mappings satisfy the conditions across the entire state space. 
If the verifier confirms that the conditions hold globally, the procedure has successfully synthesised a valid stutter-insensitive bisimulation, as induced by the classifier. From this, the corresponding finite quotient system is extracted, which can be verified  using off-the-shelf finite-state model checkers, with results directly applicable to the original, potentially infinite system. 
If the induced mappings fail to generalise to the entire state space, the verifier generates a counterexample pair of states that violates the conditions. This counterexample is returned to the learner, which refines the parameters to eliminate the violation. An overview of the procedure is depicted in Figure~\ref{fig:cegis}, and we elaborate on its individual components in the following sections.

\begin{figure}[t]

\centering
\resizebox{1\textwidth}{!}{%
\begin{tikzpicture}

\begin{scope}[on background layer]

\end{scope}

% Main learner-verifier framework
\draw [ fill=lightred, line width=0.2pt , fill opacity = 0.5, text opacity = 1, rounded corners] 
    (8.35,11.03) rectangle  node (quotientsynth) {} (13.65,8.2);

\draw [ fill=BrickRed, line width=0.2pt , fill opacity = 0.2, text opacity = 1, rounded corners] 
    (8.5,10.25) rectangle  node (learner) {\scriptsize \textbf{Learner}} (10.5,9);

\draw [ fill=BrickRed, line width=0.2pt , fill opacity = 0.2, text opacity = 1, rounded corners] 
    (11.5,10.25) rectangle  node (certifier) {\scriptsize \textbf{Verifier}} (13.5,9);

\node[] at (8,10.8) {\scriptsize$f$};
\node[] at (7.8,8.4) {\scriptsize UNSAT};

% Partition Template
\draw [ fill=lightred, line width=0.2pt , fill opacity = 0.8, text opacity = 1, rounded corners, align = center] 
    (5.5,10.25) rectangle  node (subquo) {\scriptsize \textbf{Classifier}\\[-3pt]\scriptsize \textbf{Template}} (7.5,9);

% Arrows for Partition Template
\draw[-stealth] ([yshift=3.7em, xshift=0em]subquo.north) to[bend left = 0] ([yshift=.9em, xshift = 0em]subquo.north);
\node[] at (6.5,11.4) {\scriptsize$\llangle \cdot \rrangle$};

% Arrows for Learner
\draw[-stealth] ([yshift=4.2em, xshift=0em]learner.north) to[bend left = 0] ([yshift=1.35em, xshift = 0em]learner.north);
\node[] at (9.5,11.4) {\scriptsize$D, \mathcal{M}, r$};

% Arrows for Certifier
\node[] at (12.5,11.4) {\scriptsize$\mathcal{M}, r$};
\draw[-stealth] ([yshift=4.2em, xshift=0em]certifier.north) to[bend left = 0] ([yshift=1.35em, xshift = 0em]certifier.north);

% Interaction between Partition Template and Learner
\draw[-stealth] ([yshift=.85em, xshift=1em]subquo.north) to[bend left = 27] ([yshift=1.4em, xshift = -1em]learner.north);
\draw[-stealth] ([yshift=-1.3em, xshift=-1em]learner.south) to[bend left = 27] ([yshift=-0.9em, xshift = 1em]subquo.south);

% Interaction between Learner and Certifier
\draw[-stealth] ([yshift=1.3em, xshift=1em]learner.north) to[bend left = 27] ([yshift=1.4em, xshift = -1em]certifier.north);
\draw[-stealth] ([yshift=-1.3em, xshift=-1em]certifier.south) to[bend left = 27] ([yshift=-1.4em, xshift = 1em]learner.south);

\node[] at (11,10.8) {\scriptsize$\hat{\theta}, \hat{\eta}, f$};
\node[] at (11,8.45) {\scriptsize$(\hat{s}_{cex},\hat{t}_{cex})$};
\node[] at (11.1,7.8) {\footnotesize Branching Bisimulation Learning};

% Quotient Extraction Rectangle
\draw [ fill=lightred, line width=0.2pt , fill opacity = 0.8, text opacity = 1, rounded corners, align = center] 
    (14.75,10.25) rectangle node (quotient) {\scriptsize \textbf{Quotient}\\[-3pt]\scriptsize \textbf{Extraction}} (16.75,9);

% Model Checking Rectangle
\draw [ fill=Apricot, line width=0.2pt , fill opacity = 0.4, text opacity = 1, rounded corners, align = center] 
    (17.75,10.25) rectangle  node (modelchecking) {\scriptsize \textbf{Model}\\[-3pt] \scriptsize \textbf{Checking}} (19.75,9);

% Arrows for Quotient Extraction
\draw[-stealth] ([yshift=0em, xshift=1.15em]certifier.east) to[bend left = 0] ([yshift=0em, xshift = -0.5em]quotient.west);
\draw[-stealth] ([yshift=0em, xshift=0.48em]quotient.east) to[bend left = 0] ([yshift=0em, xshift = -0.75em]modelchecking.west);

% Arrows for Model Checking
\draw[-stealth] ([yshift=3.6em, xshift=0em]modelchecking.north) to[bend left = 0] ([yshift=.9em, xshift = 0em]modelchecking.north);

\node[align = center] at (18.75,11.4) {\footnotesize $\phi$};
\node[align = center] at (14.2,9.9) {\scriptsize $f(\hat{\theta}; \cdot)$};
\node[align = center] at (17.25,9.85) {\scriptsize $\quot$};

\draw[-stealth] ([yshift=-.8em, xshift=-.85em]modelchecking.south) to[bend left = 0] ([yshift=-2.4em, xshift = -1.6em]modelchecking.south);
\draw[-stealth] ([yshift=-.8em, xshift=.85em]modelchecking.south) to[bend left = 0] ([yshift=-2.4em, xshift = 1.6em]modelchecking.south);

\node[align = center] at (17.95,8.1) {\footnotesize \cmark \\[-4pt] \scriptsize Property\\[-3pt] \scriptsize Satisfied};
\node[align = center] at (19.5,8.08) {\footnotesize \xmark \\[-4pt] \scriptsize Counter-\\[-3.5pt] \scriptsize example};

\end{tikzpicture}
}

\caption{Architecture of branching bisimulation learning.}
\label{fig:cegis}
\end{figure}
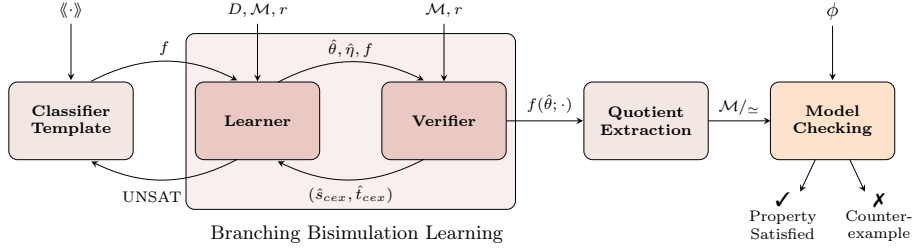

\subsubsection{Learner} The learner aims to find suitable parameters for a label-preserving state classifier template $f \colon \Theta \times S \to C$ and a ranking function template $r \colon H \times S \times S \to \mathbb{N}$ that satisfy the conditions of Corollary~\ref{col:boundednamjoshi} over a finite set of state pairs $D \subseteq S \times S$. Specifically, it attempts to solve: 

\begin{equation}
    \exists \theta \in \Theta,\eta \in H \colon \bigwedge_{(\hat{s}, \hat{t}) \in D} f_{{\theta}}(\hat{s}) = f_{{\theta}}(\hat{t}) \implies \Psi(\theta, \eta, \hat{s}, \hat{t}).  \label{eqn:learner}
\end{equation}

In our instantiation of the procedure, the learner is an SMT solver that seeks a satisfying assignment for the parameters $\theta$ and $\eta$ within the quantifier-free inner formula of~\eqref{eqn:learner}.
If the learner successfully identifies parameters that satisfy the conditions over the sample states, the resulting classifier and ranking function are passed to the verifier. However, if the learner fails to find suitable parameters, this indicates that the current function templates cannot be instantiated to comply with Corollary~\ref{col:boundednamjoshi} for the finite set of state pairs $D$.
This failure may arise for two reasons: First, since the model checking problem for infinite state systems is generally undecidable, the concrete system might not admit a finite stutter-insensitive bisimulation quotient, meaning no classifier or ranking function can satisfy the conditions of Corollary~\ref{col:boundednamjoshi}. Second, if a finite quotient does exist, the employed templates lack the expressiveness required to represent it. In this case, we must choose more expressive templates and continue the synthesis loop. In Section~\ref{sec:templates}, we detail how our instantiation of the procedure automatically increases the expressiveness of the templates as needed.

Equation~\eqref{eqn:learner} employs a single ranking function parameter $\eta \in H$ for the entire state space. To enhance flexibility, the ranking function can instead be defined piecewise with multiple parameters $\boldsymbol{\eta} = (\eta_c)_{c \in C}$, effectively assigning a separate ranking function to each class $c \in C$. While logically equivalent, this approach may enable the use of simpler templates by exploiting similarities in the temporal behavior of equivalent states. The learner then seeks to solve:
\begin{equation}
    \exists \theta \in \Theta,\boldsymbol{\eta} \in H^{|C|} \colon \bigwedge_{c \in C} \bigwedge_{(\hat{s}, \hat{t}) \in D} f_{{\theta}}(\hat{s}) = f_{{\theta}}(\hat{t}) = c \implies \Psi(\theta, \eta_c, \hat{s}, \hat{t}). \label{eqn:learner2}
\end{equation}
\subsubsection{Verifier}
The verifier checks whether the functions induced by the candidate parameters $\hat{\theta}$ and $\hat{\eta}$, proposed by the learner, generalise to the entire state space. For this, it attempts to solve the negation of the learner formula~\eqref{eqn:learner} for a counterexample pair of states:
\begin{equation}
    \exists s,t \in S \colon f_{\hat{\theta}}(s) = f_{\hat{\theta}}(t) \land \neg \Psi(\hat{\theta}, \hat{\eta}, {s}, {t}). 
    \label{eqn:cert}
\end{equation}
Similar to the learner, the verifier utilises an SMT solver, to which we provide the quantifier-free inner formula of~\eqref{eqn:cert}. If a satisfying assignment for a counterexample pair $(\hat{s}, \hat{t})$ is found, it is added to $D$ and returned to the learner, which refines the induced mappings to eliminate the counterexample. If the formula is unsatisfiable, this proves that Corollary~\ref{col:boundednamjoshi} holds for the entire state space, and the synthesis loop terminates with a valid stutter-insensitive bisimulation $\simeq_f$. 

If the learner employs a piecewise-defined ranking function as per Equation~\eqref{eqn:learner2}, the verifier instead checks the satisfiability of:
\begin{equation}
    \exists s, t \in S \colon \bigvee_{c \in C} f_{\hat{\theta}}(s) = f_{\hat{\theta}}(t) = c \land \neg \Psi(\hat{\theta}, \hat{\eta}_c, s, t),
    \label{eqn:cert2}
\end{equation}
for the candidate parameters $\hat{\theta}$ and $\boldsymbol{\hat{\eta}} = (\hat{\eta}_c)_{c \in C}$. The disjunction over the finite set of classes $C$ can be treated as independent and parallelisable SMT queries.

\subsubsection{Quotient Extraction} The learner-verifier framework yields a state classifier $f_{\hat{\theta}}(\cdot)$ that induces a valid stutter-insensitive bisimulation $\simeq_f$, provided it terminates successfully. From this, the corresponding finite quotient $\quot$ is derived by constructing the abstract transition relation $\qtrans$ and the initial states $\qinit$. The abstract state space $\qstsp$ corresponds to the finite set of classes $C$. 

To construct the abstract transition relation, we express the conditions from Definition~\ref{def:quotient} as quantifier-free first-order logic formulas, enabling efficient evaluation by an SMT solver. We then perform a series of independent, parallelisable queries for each pair of abstract states $c, d \in C$. 

\begin{itemize}
    \item An abstract transition \(c \qtrans d\) where \(c \neq d\) is established if:
    \begin{align}
        \exists s \in S \colon f_{\hat{\theta}}(s) = c \land \bigvee_{i = 1}^k f_{\hat{\theta}}( \sigma_i(s)) = d. \label{eq:transition}
    \end{align}
    \item An abstract transition \(c \qtrans c\) is established if:
    \begin{align}
        \nexists s \in S \colon f_{\hat{\theta}}( s) = c \land \bigwedge_{i = 1}^k f_{\hat{\theta}}(\sigma_i(s)) \neq c. \label{eq:selfloop}
    \end{align}
\end{itemize}
Both types of queries in Equations~\eqref{eq:transition} and~\eqref{eq:selfloop} are evaluated by passing the quantifier-free inner formula with the free variable \(s \in S\) to an SMT solver. Note that a self-loop \(c \qtrans c\) is established if the solver \emph{cannot} find a state \(s\) classified to \(c\) where all successors leave the class. This implies that every state in \(c\) has at least one successor within the same class.

To extract the abstract initial region $\qinit$, a single SMT query is sufficient for each abstract state. An abstract state is initial $c \in \qinit$ if and only if:
\begin{equation}
    \exists s \in S \colon f_{\hat{\theta}}(s) = c \wedge s \in I.
\end{equation}
Note that the synthesised ranking functions are not required for extracting the quotient. They are auxiliary in the synthesis of a valid stutter-insensitive bisimulation, while the resulting quotient depends solely on the final partition.

\section{Bisimulation Learning with Binary Decision Trees}
\label{sec:templates}
In this section, we detail our instantiation of the bisimulation learning procedure for non-deterministic systems.
We define parametric function templates for state classifiers and ranking functions employed in branching bisimulation learning, focusing on systems with discrete integer state spaces \(S \subseteq \mathbb{Z}^n\). 

For state classifiers, we use binary decision tree templates with parametric decision nodes and leaves corresponding to a finite set of classes~\cite{DBLP:conf/cav/AbateGS24}.

\begin{definition}[Binary Decision Tree Templates]
The set of binary decision tree (BDT) templates $\mathbb{T}$ over a finite set of classes $C$ and parameters $\Theta$ consists of trees $T$, which are defined as either:
\begin{itemize}
    \item a leaf node $\textsc{leaf}(c)$, where $c \in C$, or
    \item a decision node $\textsc{node}(\mu, T_l, T_r)$, where $T_l, T_r \in \mathbb{T}$ are the left and right subtrees, and $\mu \colon \Theta \times S \to \mathbb{B}$ is a parameterised predicate on the states.
\end{itemize}
A parametric tree template $T \in \mathbb{T}$ over classes $C$ and parameters $\Theta$ defines the state classifier template $f^T \colon \Theta \times S \to C$ as:
\begin{align*}
    f^T_{\theta}(s) = \begin{cases}
        c & \text{if } T = \textsc{leaf}(c), \\
        f^{T_l}_{\theta}(s) & \text{if } T = \textsc{node}(\mu, T_l, T_r) \text{ and } \mu_{\theta}(s), \\
        f^{T_r}_{\theta}(s) & \text{if } T = \textsc{node}(\mu, T_l, T_r) \text{ and } \neg \mu_{\theta}(s).
    \end{cases}
\end{align*}
\end{definition}
Binary decision trees are well-suited as state classifier templates due to their expressivity, interpretability, and straightforward translation into quantifier-free expressions over states and parameters.

To ensure that BDT templates are label-preserving, i.e., the induced classifiers respect the labelling of the original system for any parameterisation $\theta \in \Theta$, we associate atomic propositions $p \in \Pi$ with predicates $\mu_p \colon S \to \mathbb{B}$, such that:
\begin{equation}
    \obs{s} = \{p \in \Pi \mid \mu_p(s)\}.
\end{equation}
Label preservation is enforced by fixing parameter-free predicates for observations at the top nodes of the tree, with parametric decision nodes placed below them to refine the observation partition. This follows a similar principle as partition refinement~\cite{DBLP:journals/siamcomp/PaigeT87}, which starts from the observation partition and iteratively refines it.
In our instantiation, we use BDTs with affine predicates of the form \(\mu_{\theta}(s) := \theta_1 \cdot s + \theta_2 \leq 0\) at each parametric decision node, where \(\theta\) is drawn from the reals with appropriate dimension.  

BDTs facilitate an automatic increase of expressivity when the learner cannot find parameters that satisfy the well-founded bisimulation conditions over the finite set of sample states (cf. Section~\ref{sec:procedure}). The initial BDT templates are built automatically from the provided observation partition and include parameter-free top nodes that fix the labelling and a single layer of parametric decision nodes below. If suitable parameters cannot be found, we add another layer of parametric decision nodes, allowing further refinement of each class.

For parametric ranking function templates, we use affine functions of the form \(r_{\eta}(s, t) = \eta_1 \cdot s + \eta_2 \cdot t + \eta_3\), where \(\eta\) is an integer vector of appropriate dimension. These parameters must define a function that is bounded from below on its domain. For this, some systems may require a piecewise-defined ranking function.
Using affine predicates ensures that all conditions remain within a decidable fragment of first-order logic with linear arithmetic and efficient solution methods. While more expressive templates, such as non-linear ones, can be used, they require solving more complex SMT problems involving non-linear arithmetic, which are computationally expensive and, in some cases, undecidable~\cite{DBLP:series/txtcs/KroeningS16}.

\subsection{An Illustrative Example}
We illustrate our instantiation of branching bisimulation learning for the non-deterministic example program in Figure~\ref{fig:illustrative example}a.
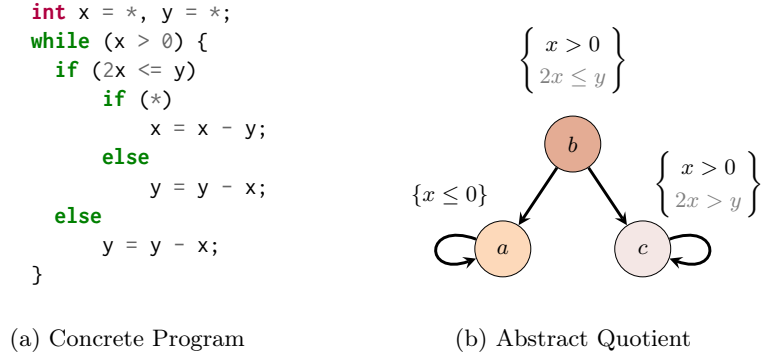
\begin{figure}[t]
    \centering  
    \begin{tabular}{ccc}
        \begin{minipage}[b]{0.47\columnwidth}
        \centering
        \begin{minted}[mathescape]{java}
          int x = *, y = *;
          while (x > 0) {
            if (2x <= y) 
                if (*)
                    x = x - y;
                else 
                    y = y - x;
            else 
                y = y - x;
          }
        \end{minted}

        \vfill
        \end{minipage}
        &
        \hspace{-100pt}
        &
        \begin{minipage}[b]{0.47\columnwidth}
        \centering
        \raisebox{12pt}{
        \scalebox{0.93}{
 \hspace{8pt}\begin{tikzpicture}[>=stealth]

%===============================================================================
% 1) Decision Tree
%===============================================================================
\begin{scope}[scale=1.0, yshift=1cm]
%\begin{tikzpicture}[>=stealth, node distance=1cm and 1.5cm]

%===============================================================================
% Nodes
%===============================================================================
% State b (top)
\node[circle, draw, fill=BrickRed!35, align=center, 
      minimum size=8mm, inner sep=0pt] 
      (b) at (0,1.5) {$b$};

% State a (bottom left)
\node[circle, draw, fill=Apricot!50, align=center, 
      minimum size=8mm, inner sep=0pt] 
      (a) at (-1,0) {$a$};

% State c (bottom right)
\node[circle, draw, fill=lightred!80, align=center, 
      minimum size=8mm, inner sep=0pt] 
      (c) at (1,0) {$c$};

%===============================================================================
% Labels
%===============================================================================
% Label for a above left
\node[align=center, above left=0.2cm and -0.2cm of a] 
      (label-a) {$\{x \leq 0\}$};

% Label for b with larger curly brackets
\node[align=center, above=0.2cm of b, font=\small] 
      (label-b) {$\left\{ \begin{array}{c} x > 0 \\ \textcolor{gray}{2x \leq y} \end{array} \right\}$};

% Label for c above right
\node[align=center, above right=0.05cm and -0.25cm of c, font=\small] 
      (label-c) {$\left\{ \begin{array}{c} x > 0 \\ \textcolor{gray}{2x > y} \end{array} \right\}$};

%===============================================================================
% Edges
%===============================================================================
% Edge from b to a
\draw[->, very thick] (b) -- (a);

% Edge from b to c
\draw[->, very thick] (b) -- (c);

% Smaller self-loop on a
\draw[->, very thick] (a) edge[out=160, in=200, loop, min distance=8mm] node[midway, left] {} (a);

% Smaller self-loop on c
\draw[->, very thick] (c) edge[out=20, in=-20, loop, min distance=8mm] node[midway, right] {} (c);

\end{scope}
\end{tikzpicture}
}}
    \end{minipage}\\
    (a) Concrete Program && 
    (b) Abstract Quotient
    \end{tabular}
    \caption{A non-deterministic program and the corresponding stutter-insensitive bisimulation quotient synthesised with branching bisimulation learning.}
    \label{fig:illustrative example}
\end{figure}
The program takes two arbitrary integers \(x\) and \(y\) as inputs and iterates while \(x > 0\). Based on variable values and a non-deterministic choice, it subtracts either \(x\) from \(y\) or vice versa, inducing an infinite non-deterministic transition system over \(S = \mathbb{Z}^2\). States are labelled by \(x \leq 0\) and \(x > 0\), indicating whether computation has terminated or is still running. Despite having infinitely many states, the program has bounded branching, as the single non-deterministic choice yields at most two successors.

Determining which states have terminating or diverging branches is non-trivial. We use branching bisimulation learning to identify conditions under which states share identical computation trees up to exact step counts and extract the stutter-insensitive bisimulation quotient. Our initial BDT template fixes the labelling at a single non-parametric top node with predicate \(x \leq 0\) and includes a parametric decision node to refine the partition labelled \(x > 0\).

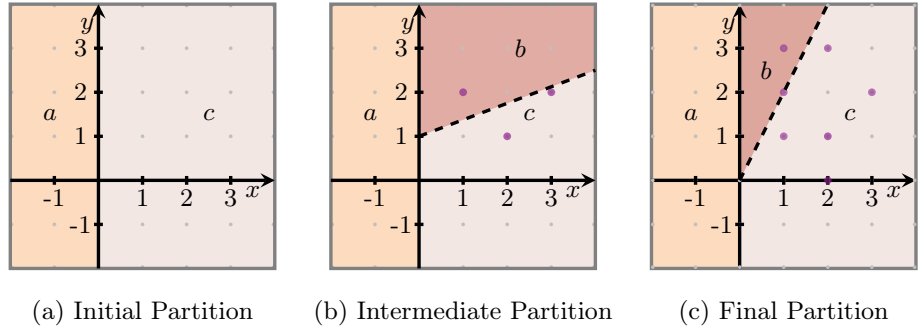
\begin{figure}[t]
\centering
\resizebox{\textwidth}{!}{%
\begin{tikzpicture}[>=stealth]

%===============================================================================
% 1) Coordinate System 1 (Left-most, no BrickRed partition), sub-label (a)
%===============================================================================
\begin{scope}[xshift=0cm, scale=0.55]

    %--- (1) Region x <= 0 (Apricot, opacity=0.5) ---
    \fill[Apricot, fill opacity=0.5]
        (-2,-2) -- (0,-2) -- (0,4) -- (-2,4) -- cycle;

    %--- (2) Region x > 0 (lightred, opacity=0.8) ---
    \fill[lightred, fill opacity=0.8]
        (0,-2) -- (4,-2) -- (4,4) -- (0,4) -- cycle;

    % Node labels
    \node at (-1.1, 1.5) {$a$}; % Orange region x <= 0
    \node at (2.5, 1.5) {$c$}; % Light-red region x > 0

    %--- Grid dots ---
    \foreach \x in {-1.97,-1,0,1,2,3,3.97} {
        \foreach \y in {-1.97,-1,0,1,2,3,3.97} {
            \fill[lightgray] (\x,\y) circle (1.2pt);
        }
    }

    %--- (4) Gray bounding box around [-2,-2] to [4,4] ---
    \draw[gray, very thick] (-2,-2) rectangle (4,4);

    %--- (5) Axes ---
    \draw[very thick,-stealth] (-2,0) -- (4,0);
    \draw[very thick,-stealth] (0,-2) -- (0,4);

    % Label the axes
    \node at (3.5, -0.25) {$x$};
    \node at (-0.25, 3.5) {$y$};

    %--- Integer ticks ---
    \foreach \x in {-1,1,2,3} {
        \draw[very thick] (\x, 0.1) -- (\x, -0.1);
        \node[font=\small] at (\x, -0.4) {\x};
    }
    \foreach \y in {-1,1,2,3} {
        \draw[very thick] (0.1,\y) -- (-0.1,\y);
        \node[font=\small] at (-0.4,\y) {\y};
    }

    % Sub-label
    \node at (1, -3) {(a) Initial Partition};
\end{scope}

%===============================================================================
% 2) Coordinate System 2 (Middle, adjusted slope with positive offset), sub-label (b)
%===============================================================================
\begin{scope}[xshift=4cm, scale=0.55]

    %--- (1) Region x <= 0 (Apricot, opacity=0.5) ---
    \fill[Apricot, fill opacity=0.5]
        (-2,-2) -- (0,-2) -- (0,4) -- (-2,4) -- cycle;

    %--- (2) Region x > 0, below y = 1/2x + 1 (lightred, opacity=0.8) ---
    \fill[lightred, fill opacity=0.8]
        (0,-2) -- (4,-2) -- (4,2.5) -- (0,1) -- cycle;

    %--- (3) Region x > 0, above y = 1/2x + 1 (BrickRed, opacity=0.4) ---
    \fill[BrickRed, fill opacity=0.4]
        (0,1) -- (4,2.5) -- (4,4) -- (0,4) -- cycle;

    % Node labels
    \node at (-1.1, 1.5) {$a$}; % Orange region x <= 0
   \node at (2.3, 3) {$b$}; % Brick-red region y >= 2x
 \node at (2.5, 1.5) {$c$}; % Light-red region x > 0

    %--- Grid dots ---
    \foreach \x in {-1.97,-1,0,1,2,3,3.97} {
        \foreach \y in {-1.97,-1,0,1,2,3,3.97} {
            \fill[lightgray] (\x,\y) circle (1.2pt);
        }
    }

    %--- (4) Gray bounding box ---
    \draw[gray, very thick] (-2,-2) rectangle (4,4);

    %--- (5) Axes ---
    \draw[very thick,-stealth] (-2,0) -- (4,0);
    \draw[very thick,-stealth] (0,-2) -- (0,4);

    % Label the axes
    \node at (3.5, -0.25) {$x$};
    \node at (-0.25, 3.5) {$y$};

    %--- Integer ticks ---
    \foreach \x in {-1,1,2,3} {
        \draw[very thick] (\x, 0.1) -- (\x, -0.1);
        \node[font=\small] at (\x, -0.4) {\x};
    }
    \foreach \y in {-1,1,2,3} {
        \draw[very thick] (0.1,\y) -- (-0.1,\y);
        \node[font=\small] at (-0.4,\y) {\y};
    }

        % Plum-colored points
    \fill[Plum, opacity=0.7] (2,1) circle (2.5pt);
    \fill[Plum, opacity=0.7] (3,2) circle (2.5pt);
    \fill[Plum, opacity=0.7] (1,2) circle (2.5pt);

    %--- (8) Dashed line ---
    \draw[very thick, dashed] (0,1) -- (4,2.5);

    % Sub-label
    \node at (1, -3) {(b) Intermediate Partition};
\end{scope}

%===============================================================================
% 3) Coordinate System 3 (Right-most, unchanged), sub-label (c)
%===============================================================================
\begin{scope}[xshift=8cm, scale=0.55]

    %--- (1) Region x <= 0 (Apricot, opacity=0.5) ---
    \fill[Apricot, fill opacity=0.5]
        (-2,-2) -- (0,-2) -- (0,4) -- (-2,4) -- cycle;

    %--- (2) Region x > 0 (lightred, opacity=0.8) ---
    \fill[lightred, fill opacity=0.8]
        (0,-2) -- (4,-2) -- (4,4) -- (0,4) -- cycle;

    %--- (3) Subregion y >= 2x, 0 <= x <= 2 (BrickRed, opacity=0.35) ---
    \fill[BrickRed, fill opacity=0.35]
        (0,0) -- (2,4) -- (0,4) -- cycle;

    % Node labels
    \node at (-1.1, 1.5) {$a$}; % Orange region x <= 0
    \node at (0.6, 2.5) {$b$}; % Brick-red region y >= 2x
 \node at (2.5, 1.5) {$c$}; % Light-red region x > 0

    %--- (4) Gray bounding box ---
    \draw[gray, very thick] (-2,-2) rectangle (4,4);

    %--- (5) Axes ---
    \draw[very thick,-stealth] (-2,0) -- (4,0);
    \draw[very thick,-stealth] (0,-2) -- (0,4);

    % Label the axes
    \node at (3.5, -0.25) {$x$};
    \node at (-0.25, 3.5) {$y$};

    %--- Grid dots ---
    \foreach \x in {-1.97,-1,0,1,2,3,3.97} {
        \foreach \y in {-1.97,-1,0,1,2,3,3.97} {
            \fill[lightgray] (\x,\y) circle (1.2pt);
        }
    }
    
    %--- Integer ticks ---
    \foreach \x in {-1,1,2,3} {
        \draw[very thick] (\x, 0.1) -- (\x, -0.1);
        \node[font=\small] at (\x, -0.4) {\x};
    }
    \foreach \y in {-1,1,2,3} {
        \draw[very thick] (0.1,\y) -- (-0.1,\y);
        \node[font=\small] at (-0.4,\y) {\y};
    }

        % Plum-colored points
    \fill[Plum, opacity=0.7] (2,1) circle (2.5pt);
    \fill[Plum, opacity=0.6] (3,2) circle (2.5pt);
    \fill[Plum, opacity=0.6] (1,2) circle (2.5pt);
    \fill[Plum, opacity=0.6] (2,0) circle (2.5pt);
    \fill[Plum, opacity=0.6] (1,1) circle (2.5pt);
    \fill[Plum, opacity=0.6] (2,3) circle (2.5pt);
    \fill[Plum, opacity=0.6] (1,3) circle (2.5pt);
    %\fill[Plum, opacity=0.6] (0,1) circle (2.5pt);

    %--- (8) Dashed line ---
    \draw[very thick, dashed] (0,0) -- (2,4);

    % Sub-label
    \node at (1, -3) {(c) Final Partition};
\end{scope}

\end{tikzpicture}
}
\caption{Iterative process of branching bisimulation learning, illustrating three stages with generated counterexample states (purple dots).}
\label{fig:blprocess}
\end{figure}

The process of branching bisimulation learning is illustrated in Figure~\ref{fig:blprocess}. Starting from an arbitrary initial partition (Fig.~\ref{fig:blprocess}a), the procedure iteratively generates counterexample states and refines the partition to meet the conditions of a well-founded bisimulation over the finite sample set (Fig.~\ref{fig:blprocess}b). The process terminates when no counterexample states remain (Fig.~\ref{fig:blprocess}c). 

The resulting partition certifies that refining the non-terminated class labeled \(x > 0\) along the predicate \(2x - y \leq 0\) induces a valid stutter-insensitive bisimulation. From this partition, we extract the stutter-insensitive bisimulation quotient, shown in Figure~\ref{fig:illustrative example}b. Initial states are omitted, as branching bisimulation learning is independent of initial states. It is stronger in that it determines the conditions defining a state's characteristics, as reflected by its outgoing computation tree, for all states simultaneously. In the quotient, the class \(b\) precisely represents states with both terminating and diverging outgoing paths, as expressed by the \(\ctln\) formula \(\phi = \exists \lozenge \square (x \leq 0) \land \exists \square (x > 0)\).

In this example, the initial BDT template was sufficiently expressive to capture a valid stutter-insensitive bisimulation on the state space. If the learner fails to find a partition that satisfies the conditions in Section~\ref{sec:procedure} over the finite set of sample states, BDTs enable an automatic increase in expressivity by adding an extra layer of decision nodes which further refine the partition.

\section{Experimental Evaluation}
\label{sec:experiments}
We implemented our instantiation of branching bisimulation learning with BDT classifier templates in a software prototype and evaluated it across a diverse range of case studies, including deterministic and concurrent software, communication protocols, and reactive systems. Our procedure is benchmarked against state-of-the-art tools, including the nuXmv model checker~\cite{DBLP:conf/vmcai/Bradley11,DBLP:journals/iandc/BurchCMDH92,DBLP:conf/cav/CavadaCDGMMMRT14}, the Ultimate~\cite{DBLP:conf/tacas/HeizmannBDFHKNSSP23} and CPAChecker~\cite{DBLP:conf/cav/BeyerK11} software verifiers, and the T2~\cite{DBLP:conf/tacas/BrockschmidtCIK16} infinite-state branching-time model checker. 
We employ the Z3 SMT solver~\cite{DBLP:conf/tacas/MouraB08} in both the learner and verifier components of the learning loop and use the nuXmv model checker to verify the obtained finite stutter-insensitive bisimulation quotients.

\subsubsection{Setup}  
Branching bisimulation learning provides a unified approach for verifying infinite-state non-deterministic systems with respect to \(\ctln\) specifications. Since our procedure encompasses a broad range of system and specification classes, we compare its performance against specialised state-of-the-art tools. Specifically, we evaluate its effectiveness on key special cases, including deterministic infinite-state and large finite-state systems, before extending our analysis to the full generality of infinite-state and non-deterministic systems. 
Concretely, we consider:
\begin{enumerate}
    \item Deterministic finite-state clock synchronisation protocols~\cite{DBLP:conf/cav/AbateGS24,DBLP:conf/ifip/Lamport83}, including TTEthernet~\cite{DBLP:conf/cpsweek/BogomolovHS16,DBLP:conf/tacas/BogomolovFGH17} and an interactive convergence algorithm~\cite{DBLP:conf/podc/LamportM84}, ensuring \sloppy agents synchronise despite clock drift. We verify safety (clocks stay within a safe distance) and liveness (agents repeatedly synchronise). We compare against nuXmv using IC3 and BDD-based symbolic model checking, and deterministic bisimulation learning~\cite{DBLP:conf/cav/AbateGS24}, which is tailored to this type of system. We evaluate multiple instances with varying time discretisations, where smaller time steps lead to larger state spaces.
    
    \item Conditional termination of infinite-state deterministic software from the SV-COMP termination category~\cite{DBLP:conf/cav/BeyerK11}. We compare against nuXmv using IC3, Ultimate, CPAChecker, and deterministic bisimulation learning. Here, non-bisimulation-based approaches require separate benchmark instances, as they can only verify termination or non-termination of all initial states. In contrast, bisimulation learning synthesises a quotient, precisely identifying conditions for termination and divergence (see Figure~\ref{fig:illustrative example}).

    \item  Infinite-state concurrent software and reactive systems with non-deterministic inputs. We consider non-deterministic conditional termination benchmarks, concurrent software from the T2 verifier benchmark set~\cite{DBLP:conf/cav/BeyenePR13,DBLP:conf/tacas/BrockschmidtCIK16}, and reactive robotics where agents navigate while avoiding collisions. For $\ltln$ specifications, we compare against nuXmv (IC3) and Ultimate. For $\ctlnn$ and $\ctln$, we compare against T2, the only tool for branching-time verification of infinite-state non-deterministic systems. We also consider finite-state versions verified for \(\ctlnn\) using nuXmv with BDDs.
    
\end{enumerate}

\begin{figure}[t]
    \centering
         \includegraphics[height=\scatterplotheight]{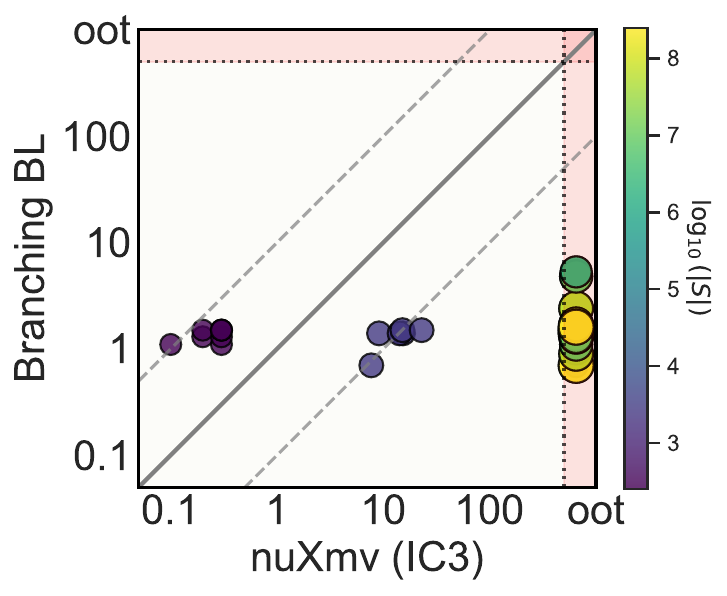}
        \includegraphics[height=\scatterplotheight]{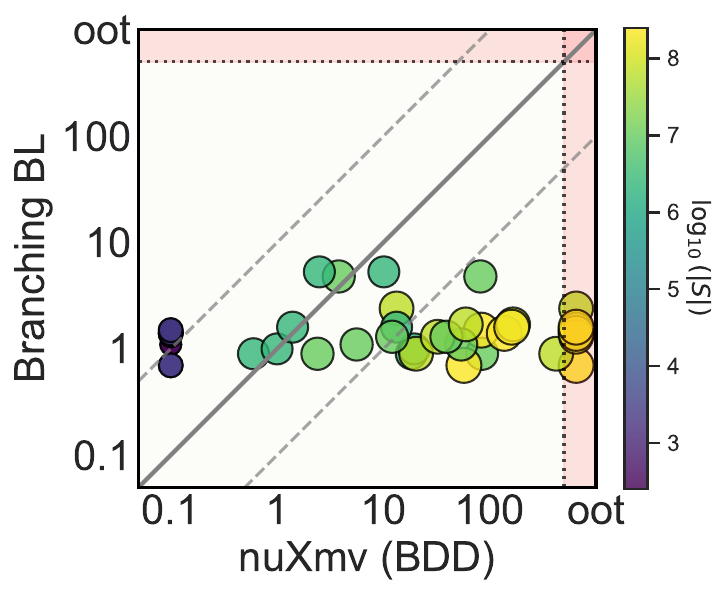}
        \includegraphics[height=\scatterplotheight]{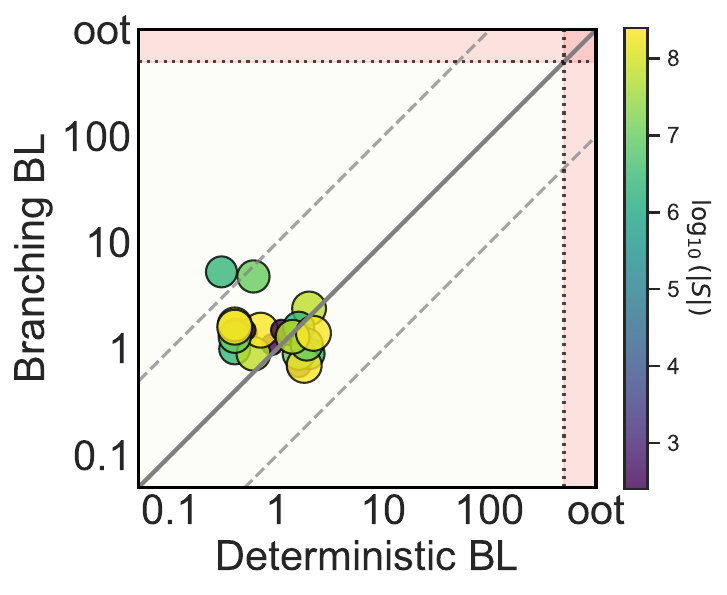}
    \caption{Deterministic finite-state clock synchronisation benchmarks, verified against $\ltln$ specifications. Data-point colours indicate state-space size. }
    \label{fig:clock}
\end{figure}
\begin{figure}[t]
    \centering
    \includegraphics[height=\scatterplotheight]{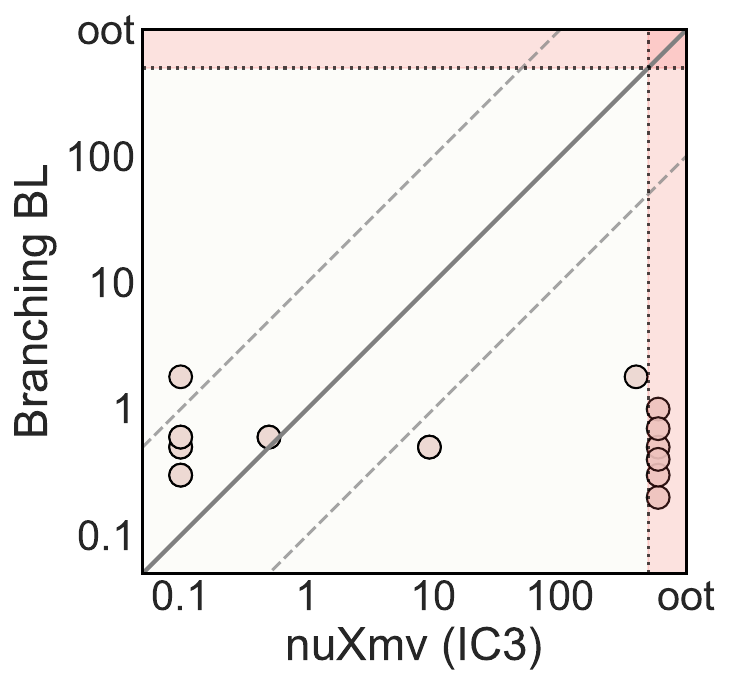}\hfill
    \includegraphics[height=\scatterplotheight]{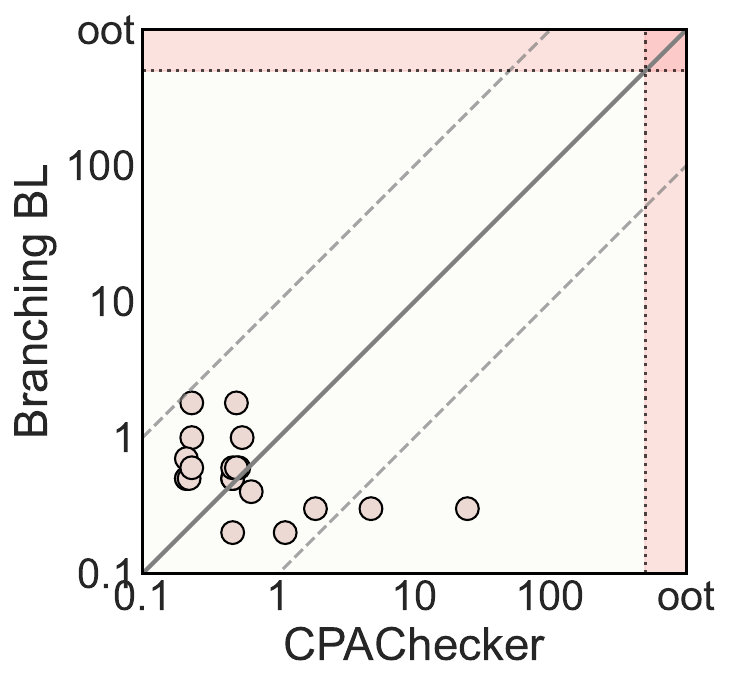}\hfill
    \includegraphics[height=\scatterplotheight]{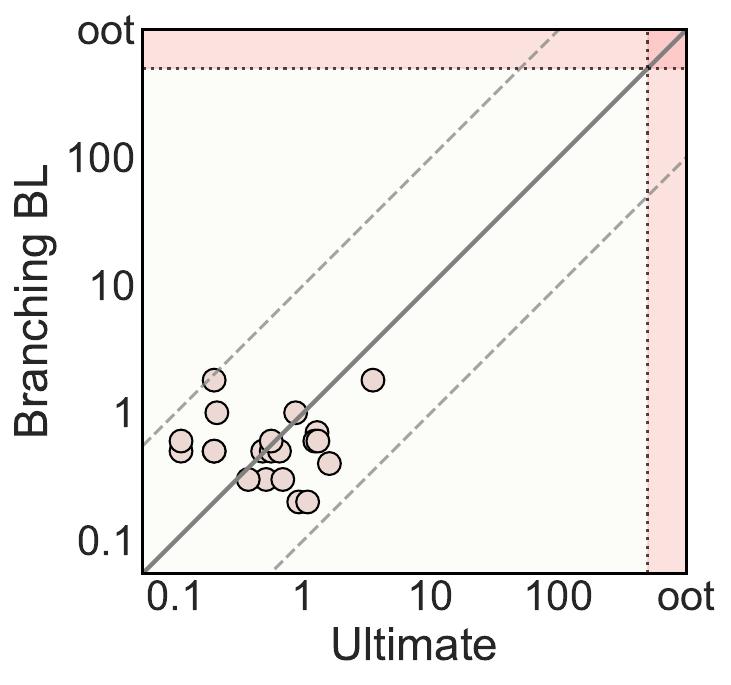}\hfill
    \includegraphics[height=\scatterplotheight]{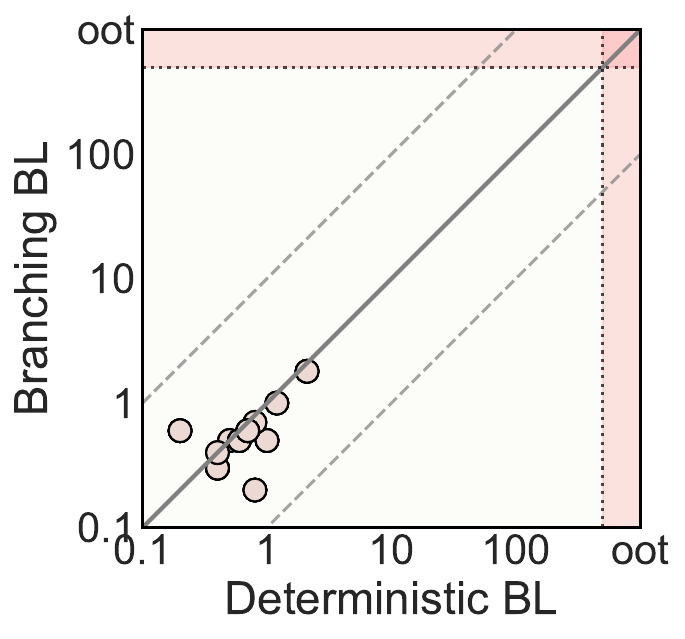}
    \caption{Deterministic infinite-state termination analysis benchmarks.}
    \label{fig:term}
\end{figure}
\begin{figure}[t]
    \centering
    \begin{tabular}{cccc}
    \;\;\;\;\; \scriptsize $\ltln$ & \;\;\;\;\;\;\scriptsize$\ltln$ & \;\;\;\;\;\scriptsize$\ctlnn$ & \;\;\;\;\; \scriptsize$\ctln$\\
        \includegraphics[height=\scatterplotheight]{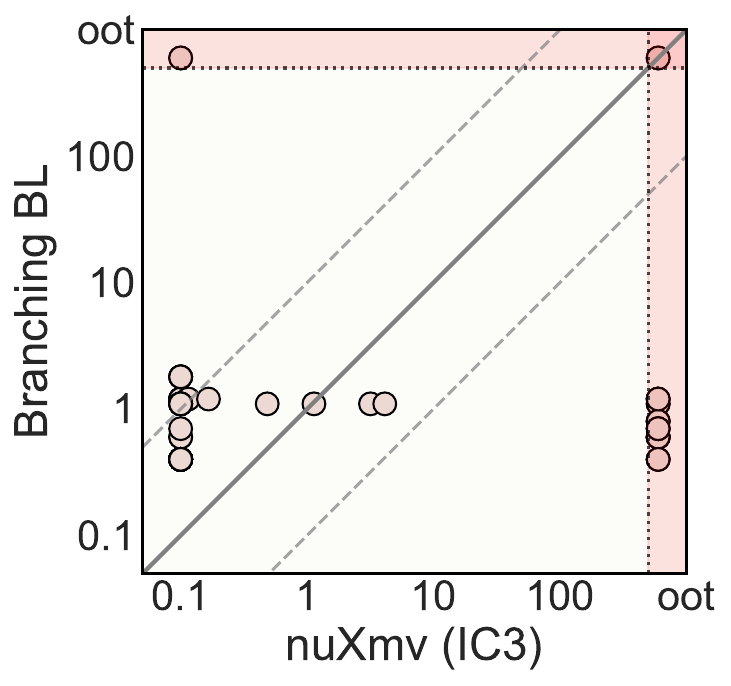}\hfill&
    \includegraphics[height=\scatterplotheight]{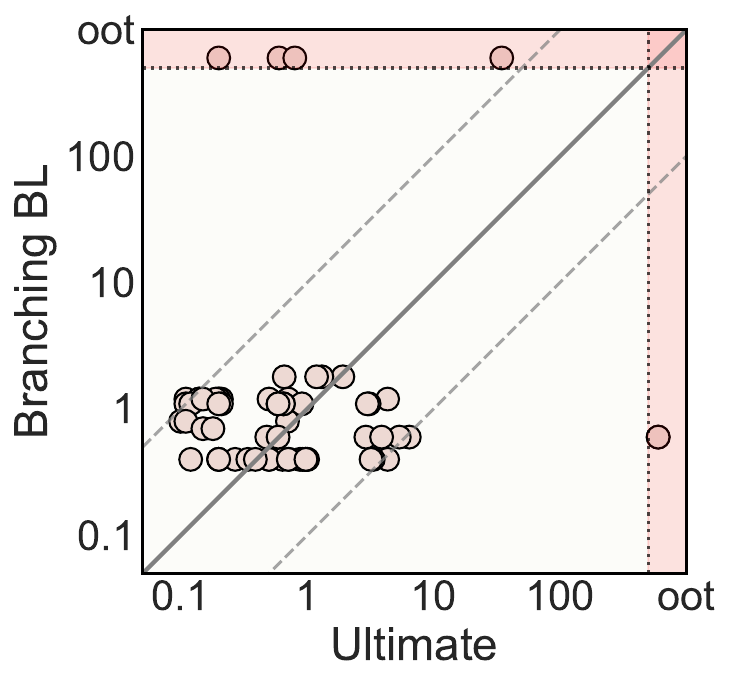}\hfill&
    \includegraphics[height=\scatterplotheight]{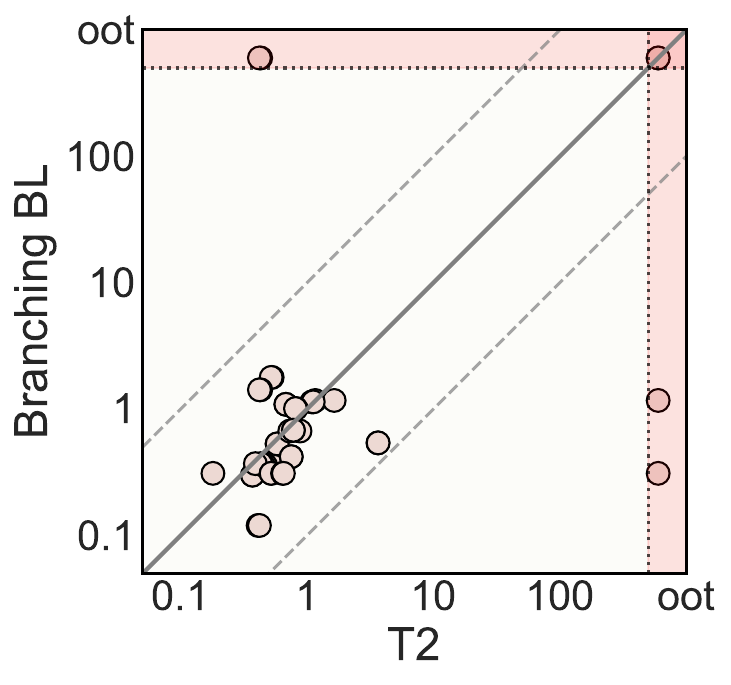}\hfill&
    \includegraphics[height=\scatterplotheight]{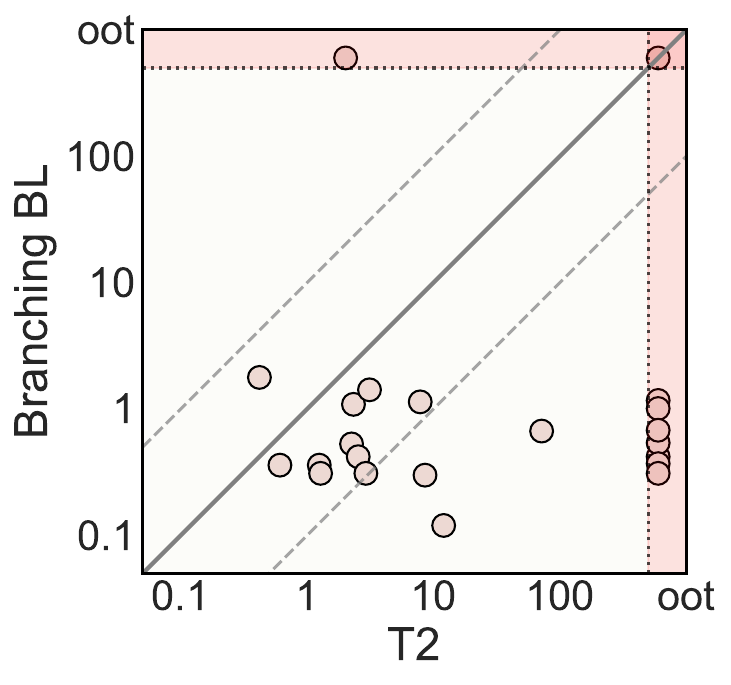}\hfill
    \end{tabular}
        \caption{Non-deterministic infinite-state benchmarks.}
    \label{fig:ltlctl}

\end{figure}
\subsubsection{Results}
We present the resulting verification runtimes in Figures~\ref{fig:clock} to~\ref{fig:ltlctl}\ifthenelse{\isundefined{\techreport}}{. Detailed results can be found in the technical report~\cite{arxiv}}{, with detailed results and verified formulas in Appendix~\ref{app:exp}}. Each figure compares branching bisimulation learning with BDT templates against applicable baseline tools. Runtimes are shown in seconds, with dashed diagonals indicating 10-fold differences. Each data point represents a case-study and formula combination. 

Timeouts in the red areas correspond to runtimes exceeding 500 seconds. For baselines, we report only analysis time, excluding any preprocessing of programs. Since bisimulation learning depends on the sequence of produced counterexamples, which is generally non-deterministic, we run each benchmark 10 times. The plots report average runtimes, with variances detailed in \ifthenelse{\isundefined{\techreport}}{the technical report~\cite{arxiv}}{Appendix~\ref{app:exp}}.

\subsubsection{Discussion}
The results demonstrate that branching bisimulation learning is an effective and general verification approach across key system and specifications classes, extending to the full generality of non-deterministic infinite-state systems with \(\ctln\) specifications, where T2 is the only competitor.

For deterministic clock synchronisation and termination benchmarks, branching bisimulation learning remains competitive with baseline tools and even deterministic bisimulation learning~\cite{DBLP:conf/cav/AbateGS24}, despite using a more expressive proof rule. This is presumably due to the separation of partition learning and quotient extraction, reducing the complexity of used SMT queries. As a result, it retains the advantages of deterministic bisimulation learning. For termination benchmarks, our approach achieves runtimes comparable to state-of-the-art tools while also producing interpretable binary decision trees that characterise the conditions under which states terminate, aiding system diagnostics and fault analysis.

For non-deterministic infinite-state benchmarks, the results confirm that branching bisimulation learning is effective in synthesising succinct \sloppy stutter-insensitive bisimulation quotients, enabling efficient \(\ctln\) verification by an off-the-shelf finite-state model checker. Notably, for pure \(\ctln\) properties beyond \(\ltln\) or \(\ctlnn\), our approach outperforms T2, the only available alternative, on its benchmark set.  This shows the advantage of branching bisimulation learning in making the verification of complex branching-time properties scalable and practical for non-deterministic infinite-state systems.

We note that bisimulation learning is not a stand-alone procedure but a lightweight preprocessing technique that synthesises succinct finite quotients for infinite-state systems, enabling their verification with standard finite-state model checkers. Since model checking for this general class of systems is undecidable, our approach is necessarily incomplete, as not all systems admit a finite stutter-insensitive bisimulation quotient. However, our experiments demonstrate that many standard benchmarks do yield finite quotients, and branching bisimulation learning is able to efficiently derive them from sampled system behaviours.

\section{Related Work}

Notions of abstractions, particularly bisimulation relations and their efficient computation, have been widely studied in the literature~\cite{DBLP:journals/fac/BalcazarGS92,DBLP:journals/iandc/KanellakisS90,DBLP:books/sp/Milner80,DBLP:books/cu/12/AcetoIS12}. 
The primary notion is \emph{strong} bisimulation, which requires related states to match each other’s transitions exactly at every step, thereby preserving all linear- and branching-time properties expressible in common specification logics such as $\ltl$, $\ctls$, and even the $\mu$-calculus~\cite{DBLP:conf/icalp/Kozen82}. This notion has been relaxed into weaker variants that preserve only the externally observable behaviour, abstracting from exact stepwise equivalence, such as \emph{stutter-insensitive} bisimulation~\cite{DBLP:journals/tcs/BrowneCG88,DBLP:journals/jacm/NicolaV95}. These variants allow for much more succinct quotients while still preserving properties that do not include the exact stepwise \emph{next}-operator, such as formulas in $\ctln$. 

Stutter-insensitive bisimulations are closely related to \emph{branching} bisimulations~\cite{DBLP:journals/jacm/GlabbeekW96}, which serve as the natural analogue when actions rather than states are labelled~\cite{DBLP:conf/icalp/GrooteV90}. Any stutter-insensitive bisimulation on a state-labelled transition system forms a branching bisimulation on the corresponding action-labelled transition system, with both representations interconvertible via a standard construction~\cite{DBLP:conf/icalp/GrooteV90,DBLP:journals/tocl/GrooteJKW17}. This observation motivates the name of our procedure, which extends the applicability of  bisimulation learning to nondeterministic systems and branching-time specifications, while also enabling the computation of branching bisimulations for action-labelled transition systems, even though our focus remains on state-labelled systems~\cite{DBLP:conf/cav/AbateGS24}.

The standard approach to computing bisimulation relations are partition refinement algorithms, such as the Paige-Tarjan algorithm~\cite{HOPCROFT1971189,DBLP:journals/siamcomp/PaigeT87,DBLP:conf/concur/FernandezM91}. These algorithms iteratively refine an equivalence relation with respect to the bisimulation conditions until the partition stabilises into a valid bisimulation. This generalises to stutter-insensitive and branching bisimulations, although it requires computing unbounded pre or post images of the transition relation to determine the refinement~\cite{DBLP:conf/icalp/GrooteV90,DBLP:journals/tocl/GrooteJKW17,DBLP:journals/corr/abs-1909-10824}. Since partition refinement must process the entire state space, it can fall short on large systems. Moreover, for infinite state systems, symbolic procedures are required, incurring costly quantifier elimination~\cite{DBLP:conf/rp/MummeC11,DBLP:conf/cav/BouajjaniFH90,DBLP:conf/concur/AlfaroHM01}. Bisimulation learning introduces an incremental approach based on inductive synthesis that computes stutter-insensitive bisimulations by generalising from sample transitions, thereby circumventing the need to process the entire state space and enabling efficient computation for infinite state systems~\cite{DBLP:conf/cav/AbateGS24,DBLP:conf/cav/AbateDKKP18,DBLP:conf/birthday/AbateGRS25}.

Model checking of infinite-state systems with respect to branching-time properties has been explored through techniques based on the satisfiability of Horn clauses~\cite{DBLP:conf/birthday/BjornerGMR15,DBLP:conf/cav/Gurfinkel22,DBLP:conf/cade/BjornerMR12,DBLP:conf/sas/BjornerMR13}, a class of universally quantified first-order logic formulas. These clauses are particularly effective for encoding safety properties of nondeterministic infinite-state systems. When extended with existential quantification, they facilitate the verification of full $\text{CTL}^*$ properties~\cite{DBLP:conf/cav/BeyenePR13,DBLP:journals/corr/BeyenePR16,DBLP:conf/atva/CarelliG24,DBLP:journals/tcs/KestenP05}. This approach is implemented in the T2 verification framework~\cite{DBLP:conf/tacas/BrockschmidtCIK16,DBLP:conf/cav/CookKP15}. However, existential Horn clauses introduce quantifier alternation, which is computationally expensive even for mature symbolic first-order logic solvers~\cite{DBLP:conf/pldi/SFMD21,DBLP:conf/cav/GarciaContrerasKSG23}.
An alternative approach reduces $\text{CTL}^*$ formulas to the modal $\mu$-calculus and verifies the resulting expression by solving a parity game~\cite{DBLP:journals/tcs/Dam94,DBLP:journals/tcs/NiwinskiW96}. However, the translation from $\ctls$ to the $\mu$-calculus incurs a doubly-exponential blowup in the size of the formula, rendering the approach infeasible for many practical examples.

\section{Conclusion}
\label{sec:concl}

We have presented a generalised branching bisimulation learning algorithm for non-deterministic systems with bounded branching, and 
enabled effective model checking of finite- and infinite-state systems against $\ctln$ specifications. 
Our lightweight approach implements the proof rule for well-founded bisimulations within a 
counterexample-guided inductive synthesis loop, which performs comparably to mature tools for the special case of linear-time properties and 
establishes a new state of the art for branching-time properties. 
Our method produces stutter-insensitive bisimulations of the system under analysis for proving branching- and linear-time temporal properties as well as providing genuine abstract counterexamples or the exact initial conditions for which the property is satisfied. 
Our contribution provides the groundwork for the development of efficient model checkers that integrate bisimulation learning as a technique to reduce non-deterministic systems with very large or infinite state spaces to equivalent systems with finite and succinct state space. 

\subsubsection*{\ackname}
This work was funded in part by the Advanced Research + Invention Agency (ARIA) under the Safeguarded AI programme. 

\subsubsection*{\discintname}
The authors have no competing interests to declare that are relevant to the content of this article.

% ---- Bibliography ----
%
% BibTeX users should specify bibliography style 'splncs04'.
% References will then be sorted and formatted in the correct style.
%
\bibliographystyle{splncs04}
\bibliography{main}
\ifthenelse{\isundefined{\techreport}}{
}{
\appendix
\section{Detailed Experimental Evaluation}
\label{app:exp}

We provide details for the benchmarks, verified properties, and measured runtimes for the experiments presented in Section~\ref{sec:experiments}.

\vspace{-3pt}
\subsection{Deterministic Finite-state Clock Synchronisation}
Table~\ref{tab:clock} details the deterministic finite-state clock synchronisation benchmarks from Figure~\ref{fig:clock}. We consider two protocols:  
\begin{enumerate}
    \item TTEthernet: Agents send clock values to a central master, which computes and broadcasts the median~\cite{DBLP:conf/cpsweek/BogomolovHS16}.
    \item Interactive Convergence: Agents exchange clock values, averaging them while discarding outliers~\cite{DBLP:conf/ifip/Lamport83}. 
\end{enumerate}
We verify (i) a safety invariant ensuring clocks stay within a maximum distance and (ii) a liveness property ensuring periodic synchronisation. Each protocol is tested in safe (satisfying) and unsafe (violating) variants. Finer time dicretisation increases timesteps per second, expanding the state space.
\newcolumntype{N}{>{\centering\arraybackslash}m{2.2cm}}
\newcolumntype{M}{>{\centering\arraybackslash}m{1.5cm}}
\newcolumntype{D}{>{\centering\arraybackslash}m{2.4cm}}
\renewcommand{\arraystretch}{1.125}
\begin{table}[ht]
\vspace{-12pt}
\centering
\caption{
Results for reactive clock synchronisation benchmarks. Benchmark names encode parameters, e.g., "tte-sf-1k" denotes a safe TTEthernet instance with a 1000-step-per-second dicretisation. All times are measured in seconds.
}

\vspace{5pt}
\resizebox{\linewidth}{!}{
\begin{tabular}{ccMMMMDN}
\toprule
Benchmark & $|S|$ & \multicolumn{2}{c}{\textbf{nuXmv} (IC3)} & \multicolumn{2}{c}{\textbf{nuXmv} (BDDs)} & \multicolumn{2}{c}{\textbf{Bisimulation Learning}} \\
          &            & $\square(\text{safe})$ & $\square\lozenge(\text{sync})$ & $\square(\text{safe})$ & $\square\lozenge(\text{sync})$ & Deterministic & Branching \\
\midrule
tte-sf-10    & 250         & 0.1  & 0.3  & < 0.1 & < 0.1 & 0.9\tiny$\pm 0.5$ & 1.1\tiny$\pm 0.8$ \\
tte-sf-100   & 2500        & 7.7  & oot  & < 0.1 & 0.1  & 1.6\tiny$\pm 1.1$ & 0.7\tiny$\pm 0.3$ \\
tte-sf-1k    & \num{2.5e6} & oot  & oot  & 0.6  & 18.3 & 2.0\tiny$\pm 0.9$ & 0.9\tiny$\pm 0.2$ \\
tte-sf-2k    & \num{1e7}   & oot  & oot  & 2.4  & 84.6 & 1.6\tiny$\pm 1.0$ & 0.9\tiny$\pm 0.2$ \\
tte-sf-5k    & \num{6.25e7}& oot  & oot  & 13.3 & oot  & 2.0\tiny$\pm 0.9$ & 2.4\tiny$\pm 5.0$ \\
tte-sf-10k   & \num{2.5e8} & oot  & oot  & 57.2 & oot  & 1.8\tiny$\pm 1.1$ & 0.7\tiny$\pm 0.2$ \\
\midrule
tte-usf-10   & 250         & 0.2  & 0.3  & < 0.1 & < 0.1 & 0.5\tiny$\pm 0.0$ & 1.3\tiny$\pm 0.7$ \\
tte-usf-100  & 2500        & 15.3 & 9.1  & < 0.1 & 0.1  & 0.5\tiny$\pm 0.1$ & 1.4\tiny$\pm 0.5$ \\
tte-usf-1k   & \num{2.5e6} & oot  & oot  & 1.0  & 19.5 & 0.4\tiny$\pm 0.1$ & 1.0\tiny$\pm 1.0$ \\
tte-usf-2k   & \num{1e7}   & oot  & oot  & 3.8  & 81.7 & 0.6\tiny$\pm 0.1$ & 4.8\tiny$\pm 8.4$ \\
tte-usf-5k   & \num{6.25e7}& oot  & oot  & 20.3 & 421.9& 0.6\tiny$\pm 0.1$ & 0.9\tiny$\pm 0.3$ \\
tte-usf-10k  & \num{2.5e8} & oot  & oot  & 83.3 & oot  & 0.7\tiny$\pm 0.1$ & 1.5\tiny$\pm 0.8$ \\
\midrule
con-sf-10    & 250         & 0.2  & 0.3  & < 0.1 & < 0.1 & 1.1\tiny$\pm 0.6$ & 1.5\tiny$\pm 0.4$ \\
con-sf-100   & 2500        & 14.2 & oot  & < 0.1 & < 0.1 & 1.3\tiny$\pm 1.1$ & 1.4\tiny$\pm 0.6$ \\
con-sf-1k    & \num{2.5e6} & oot  & oot  & 1.4  & 13.3 & 1.6\tiny$\pm 1.0$ & 1.6\tiny$\pm 1.0$ \\
con-sf-2k    & \num{1e7}   & oot  & oot  & 5.6  & 53.9 & 1.9\tiny$\pm 1.4$ & 1.1\tiny$\pm 0.7$ \\
con-sf-5k    & \num{6.25e7}& oot  & oot  & 32.4 & oot  & 1.4\tiny$\pm 1.0$ & 1.3\tiny$\pm 0.5$ \\
con-sf-10k   & \num{2.5e8} & oot  & oot  & 137.1& oot  & 2.2\tiny$\pm 1.1$ & 1.4\tiny$\pm 0.7$ \\
\midrule
con-usf-10   & 250         & 0.3  & 0.3  & < 0.1 & < 0.1 & 0.5\tiny$\pm 0.1$ & 1.5\tiny$\pm 0.9$ \\
con-usf-100  & 2500        & 15.1 & 22.9 & < 0.1 & < 0.1 & 0.4\tiny$\pm 0.1$ & 1.5\tiny$\pm 0.8$ \\
con-usf-1k   & \num{2.5e6} & oot  & oot  & 2.5  & 10.1 & 0.3\tiny$\pm 0.1$ & 5.3\tiny$\pm 11.4$ \\
con-usf-2k   & \num{1e7}   & oot  & oot  & 12.1 & 39.2 & 0.4\tiny$\pm 0.1$ & 1.3\tiny$\pm 1.1$ \\
con-usf-5k   & \num{6.25e7}& oot  & oot  & 59.6 & 166.0& 0.4\tiny$\pm 0.1$ & 1.7\tiny$\pm 1.1$ \\
con-usf-10k  & \num{2.5e8} & oot  & oot  & 161.0& oot  & 0.4\tiny$\pm 0.1$ & 1.6\tiny$\pm 1.6$ \\
\bottomrule
\end{tabular}
}
\label{tab:clock}
\vspace{-8mm}
\end{table}

\subsection{Deterministic Infinite-state Termination Analysis}
Table~\ref{tab:term} presents results for the deterministic infinite-state termination benchmarks presented in Figure~\ref{fig:term}. We evaluate benchmarks from program termination analysis, including programs from the SV-COMP termination category~\cite{DBLP:conf/tacas/Beyer23}. These programs operate on unbounded integer variables and may either terminate or enter a non-terminating loop depending on the input. We selected a representative subset of the benchmarks that is currently supported by our implementation, i.e., without data-structures and external library calls. We also had to exclude programs with unbounded non-deterministic variable updates leading to transition systems with unbounded branching.

The baseline tools verify termination for all inputs, while our approach provides the conditions for termination, precisely separating terminating and non-terminating inputs. To run the baseline tools we distinguish between these cases, each program is tested in two versions: one allowing only terminating inputs (“$\text{term}$”) and another including potentially non-terminating inputs (“$\neg\text{term}$”).

\vspace{-10pt}
\begin{table}[ht]
\centering
\caption{Results for deterministic infinite-state termination benchmarks. A ``-'' indicates that there is no such special case for the benchmark.}
\vspace{10pt}
\resizebox{\linewidth}{!}{
\begin{tabular}{c M M M M M M D N}
\toprule
Benchmark      & \multicolumn{2}{c}{\textbf{nuXmv} (IC3)} & \multicolumn{2}{c}{\textbf{CPAChecker}} & \multicolumn{2}{c}{\textbf{Ultimate}} & \multicolumn{2}{c}{\textbf{Bisimulation Learning}} \\
& term        & $\neg$term    & term        & $\neg$term    & term        & $\neg$term    & Deterministic & Branching \\
\midrule
term-loop-1    & oot         & oot           & 0.46        & 1.12          & 0.92        & 1.08          & 0.8\tiny$\pm$0.7                         & 0.2\tiny$\pm$0.0          \\
term-loop-2    & oot         & oot           & 0.46        & 0.21          & 0.48        & 0.11          & 0.5\tiny$\pm$0.2                         & 0.3\tiny$\pm$0.0          \\
audio-compr    & $<0.1$      & $<0.1$        & 4.79        & 1.87          & 0.51        & 0.37          & 0.4\tiny$\pm$0.1                         & 0.3\tiny$\pm$0.1          \\
euclid         & oot         & oot           & 0.54        & 0.23          & 0.87        & 0.21          & 1.2\tiny$\pm$0.2                         & 1.0\tiny$\pm$0.6          \\
greater        & 9.3         & $<0.1$        & 0.46        & 0.21          & 0.56        & 0.20          & 1.0\tiny$\pm$0.8                         & 0.5\tiny$\pm$0.1          \\
smaller        & 9.3         & $<0.1$        & 0.46        & 0.22          & 0.65        & 0.20          & 0.6\tiny$\pm$0.1                         & 0.5\tiny$\pm$0.1          \\
conic          & 401.5       & $<0.1$        & 0.49        & 0.23          & 3.51        & 0.20          & 2.1\tiny$\pm$0.6                         & 1.8\tiny$\pm$0.7          \\
disjunction    & oot         & -           & 24.53       & -             & 0.69        & -             & 0.4\tiny$\pm$0.1                         & 0.3\tiny$\pm$0.1          \\
parallel       & oot         & -             & 0.63        & -             & 1.60        & -             & 0.4\tiny$\pm$0.1                         & 0.4\tiny$\pm$0.1          \\
quadratic      & oot         & -             & 0.21        & -             & 1.28        & -             & 0.8\tiny$\pm$0.2                         & 0.7\tiny$\pm$0.1          \\
cubic          & 0.5         & $<0.1$        & 0.23        & 0.46          & 1.23        & 0.56          & 0.7\tiny$\pm$0.2                         & 0.6\tiny$\pm$0.2          \\
nlr-cond       & 0.5         & $<0.1$        & 0.51        & 0.49          & 1.30        & 0.11          & 0.6\tiny$\pm$0.2                         & 0.6\tiny$\pm$0.1          \\
\bottomrule
\end{tabular}
}
\label{tab:term}
\end{table}

\subsection{Non-deterministic Infinite-State Systems}

In the following, we present the results for non-deterministic infinite-state systems, as depicted in Figure~\ref{fig:ltlctl}. These include non-deterministic versions of the termination benchmarks from the previous section, concurrent programs from the benchmark set used for the T2 verifier~\cite{DBLP:conf/cav/BeyenePR13,DBLP:conf/tacas/BrockschmidtCIK16}, and reactive robotic case studies. From the T2 benchmark set, we include only those benchmarks that could be reliably translated from the available file format.

Table~\ref{tab:ltlinf} presents results for linear-time properties expressed in \(\ltln\), comparing our approach against the nuXmv model checker (IC3) and the Ultimate verifier. Table~\ref{tab:ctlinf} reports results for branching-time properties in \(\ctlnn\) and \(\ctln\), compared against T2, the only available competitor for branching-time verification of non-deterministic infinite-state systems. 
For bisimulation learning, we report a single runtime, as it synthesises a very succinct quotient for all considered benchmarks that enables verification of arbitrary \(\ctln\) properties using a finite-state model checker in negligible time.

\begin{longtable}{c N N N N}
\caption{Updated LTL specs on non-deterministic infinite state program benchmarks.}\\

\toprule
Benchmark & $\varphi$ & \textbf{\small nuXmv} (IC3) & \textbf{\small Ultimate} & \textbf{\small Bisimulation Learning} \\
\midrule
\endfirsthead
\toprule
Benchmark & $\varphi$ & \textbf{\small nuXmv} (IC3) & \textbf{\small Ultimate} & \textbf{\small Bisimulation Learning} \\
\midrule
\endhead
\bottomrule
\endfoot
% --- Group: term-loop-nd ---
\multirow[t]{4}{*}{term-loop-nd} 
  & \tiny\texttt{(x < -1 -> G(!terminated))} & <0.1 & 0.12\tiny$\pm$0.04 & 0.42\tiny$\pm$0.12 \\
  & \tiny\texttt{(x > 1 -> F(terminated))}    & <0.1 & 0.14\tiny$\pm$0.05 &  \\
  & \tiny\texttt{F(G(x >= -1))}                & oot  & 3.34\tiny$\pm$0.07 &  \\
  & \tiny\texttt{G(F(x >= -1))}                & oot  & 3.21\tiny$\pm$0.03 &  \\ \midrule
% --- Group: term-loop-nd-2 ---
\multirow[t]{4}{*}{term-loop-nd-2} 
  & \tiny\texttt{Ours: x < 0 -> !F(terminated)}  & <0.1 & 0.11\tiny$\pm$0.03 & 1.17\tiny$\pm$0.51 \\
  & \tiny\texttt{x > 0 -> F(terminated)}         & <0.1 & 0.20\tiny$\pm$0.00 &  \\
  & \tiny\texttt{F(G(x > 0))}                     & <0.1 & 0.21\tiny$\pm$0.03 &  \\
  & \tiny\texttt{G(F(x > 0))}                     & <0.1 & 0.20\tiny$\pm$0.00 &  \\ \midrule
% --- Group: term-loop-nd-y ---
\multirow[t]{4}{*}{term-loop-nd-y} 
  & \tiny\texttt{x > 0 -> F(terminated)}         & <0.1 & 0.14\tiny$\pm$0.05 & 1.14\tiny$\pm$0.30 \\
  & \tiny\texttt{(x < y -> F(terminated))}         & oot  & 0.19\tiny$\pm$0.03 &  \\
  & \tiny\texttt{x < y -> G F(terminated)}         & oot  & 0.14\tiny$\pm$0.05 &  \\
  & \tiny\texttt{G F(terminated)}                  & <0.1 & 0.71\tiny$\pm$0.05 &  \\ \midrule
% --- Group: quadratic-nd ---
\multirow[t]{3}{*}{quadratic-nd} 
  & \tiny\texttt{x > 1 -> F(terminated)}         & oot  & 0.11\tiny$\pm$0.03 & 1.05\tiny$\pm$0.35 \\
  & \tiny\texttt{F(terminated)}                  & <0.1 & 0.91\tiny$\pm$0.07 &  \\
  & \tiny\texttt{x > 1 -> F G(terminated)}         & oot  & 0.12\tiny$\pm$0.04 &  \\ \midrule
% --- Group: cubic-nd ---
\multirow[t]{3}{*}{cubic-nd} 
  & \tiny\texttt{x > 1 -> F(terminated)}         & oot  & 0.10\tiny$\pm$0.00 & 0.72\tiny$\pm$0.29 \\
  & \tiny\texttt{F(terminated)}                  & oot  & 0.70\tiny$\pm$0.08 &  \\
  & \tiny\texttt{x > 1 -> F G(terminated)}         & oot  & 0.11\tiny$\pm$0.03 &  \\ \midrule
% --- Group: nlr-cond-nd ---
\multirow[t]{3}{*}{nlr-cond-nd} 
  & \tiny\texttt{G(!terminated)}                & 0.166 & 0.50\tiny$\pm$0.00 & 1.16\tiny$\pm$0.24 \\
  & \tiny\texttt{F(terminated)}                  & 0.115 & 4.34\tiny$\pm$0.05 &  \\
  & \tiny\texttt{x > 1 -> F G(terminated)}         & oot   & 0.15\tiny$\pm$0.05 &  \\ \midrule
% --- Group: P1 ---
\multirow[t]{4}{*}{P1} 
  & \tiny\texttt{G (n >= 0)}                     & <0.1 & 0.56\tiny$\pm$0.05 & 1.09\tiny$\pm$0.34 \\
  & \tiny\texttt{F (a = 1)}                      & <0.1 & 0.65\tiny$\pm$0.05 &  \\
  & \tiny\texttt{G F(n > 0)}                     & 3.178 & 3.1\tiny$\pm$0.15  &  \\
  & \tiny\texttt{r < a -> F G(r < a)}             & 4.121 & 0.21\tiny$\pm$0.03 &  \\ \midrule
% --- Group: P2 ---
\multirow[t]{4}{*}{P2} 
  & \tiny\texttt{G (n >= 0)}                     & <0.1 & 0.53\tiny$\pm$0.06 & 1.01\tiny$\pm$0.22 \\
  & \tiny\texttt{F (a = 1)}                      & <0.1 & 0.65\tiny$\pm$0.05 &  \\
  & \tiny\texttt{G F(n > 0)}                     & 0.484 & 3.0\tiny$\pm$0.13  &  \\
  & \tiny\texttt{r < a -> F G(r < a)}             & 1.134 & 0.2\tiny$\pm$0.0   &  \\ \midrule
% --- Group: P3 ---
\multirow[t]{4}{*}{P3} 
  & \tiny\texttt{G(a = 1 -> F(r = 1))}          & <0.1 & 0.24\tiny$\pm$0.00 & 0.30\tiny$\pm$0.00 \\
  & \tiny\texttt{F (a = 1)}                      & <0.1 & 0.64\tiny$\pm$0.05 &  \\
  & \tiny\texttt{G F(n > 0)}                     & <0.1 & 0.2\tiny$\pm$0.0   &  \\
  & \tiny\texttt{r < a -> F G(r < a)}             & <0.1 & 0.2\tiny$\pm$0.0   &  \\ \midrule
% --- Group: P4 ---
\multirow[t]{4}{*}{P4} 
  & \tiny\texttt{G(a = 1 -> F(r = 1))}          & <0.1 & 1.31\tiny$\pm$0.17 & oot \\
  & \tiny\texttt{F (n = 1)}                      & <0.1 & 0.6\tiny$\pm$0.04  &  \\
  & \tiny\texttt{G F(n > 0)}                     & oot  & 34.7\tiny$\pm$0.51 &  \\
  & \tiny\texttt{r < a -> F G(r < a)}             & oot  & 0.2\tiny$\pm$0.0   &  \\ \midrule
% --- Group: P5 ---
\multirow[t]{4}{*}{P5} 
  & \tiny\texttt{G(s = 1 -> F(u = 1))}          & <0.1 & 0.81\tiny$\pm$0.11 & 1.78\tiny$\pm$0.35 \\
  & \tiny\texttt{F(s = 1 \& u = 1)}            & <0.1 & 1.92\tiny$\pm$0.04 &  \\
  & \tiny\texttt{G F (u = 1)}                   & <0.1 & 1.19\tiny$\pm$0.05 &  \\
  & \tiny\texttt{p < i -> F G(u = 1)}            & <0.1 & 0.66\tiny$\pm$0.05 &  \\ \midrule
% --- Group: P6 ---
\multirow[t]{4}{*}{P6} 
  & \tiny\texttt{G(s = 1 -> F(u = 1))}          & <0.1 & 0.97\tiny$\pm$0.06 & 0.36\tiny$\pm$0.01 \\
  & \tiny\texttt{F(s = 1 \& u = 1)}            & <0.1 & 1.01\tiny$\pm$0.05 &  \\
  & \tiny\texttt{G F (u = 1)}                   & <0.1 & 0.88\tiny$\pm$0.06 &  \\
  & \tiny\texttt{r < a -> F G(r < a)}            & <0.1 & 0.5\tiny$\pm$0.0   &  \\ \midrule
% --- Group: P7 ---
\multirow[t]{4}{*}{P7} 
  & \tiny\texttt{G(s = 1 -> F(u = 1))}          & <0.1 & 0.97\tiny$\pm$0.06 & 0.36\tiny$\pm$0.00 \\
  & \tiny\texttt{F(s = 1 \& u = 1)}            & <0.1 & 1.0\tiny$\pm$0.04  &  \\
  & \tiny\texttt{G F (u = 1)}                   & <0.1 & 0.92\tiny$\pm$0.04 &  \\
  & \tiny\texttt{p < i -> F G(u = 1)}            & <0.1 & 0.5\tiny$\pm$0.0   &  \\ \midrule
% --- Group: P17 ---
\multirow[t]{4}{*}{P17} 
  & \tiny\texttt{x <= 1 -> G(x <= 1)}           & <0.1 & 2.92\tiny$\pm$0.07 & 0.53\tiny$\pm$0.09 \\
  & \tiny\texttt{F(x>= 1)}                     & oot  & 6.44\tiny$\pm$0.22 &  \\
  & \tiny\texttt{G(F(x >= 1))}                  & oot  & 0.48\tiny$\pm$0.07 &  \\
  & \tiny\texttt{F G (x >= 1)}                  & oot  & oot               &  \\ \midrule
% --- Group: P19 ---
\multirow[t]{4}{*}{P19} 
  & \tiny\texttt{x <= 1 -> G(x <= 1)}           & <0.1 & 3.89\tiny$\pm$0.05 & 0.54\tiny$\pm$0.12 \\
  & \tiny\texttt{F(x>= 1)}                     & oot  & 5.37\tiny$\pm$0.08 &  \\
  & \tiny\texttt{G(F(x >= 1))}                  & oot  & 0.59\tiny$\pm$0.03 &  \\
  & \tiny\texttt{F G (x >= 1)}                  & oot  & oot               &  \\ \midrule
% --- Group: P20 ---
\multirow[t]{4}{*}{P20} 
  & \tiny\texttt{x <= 1 -> G(x <= 1)}           & <0.1 & 0.0               & 0.67\tiny$\pm$0.10 \\
  & \tiny\texttt{F(x < 1)}                     & oot  & 0.15\tiny$\pm$0.05 &  \\
  & \tiny\texttt{G(F(x < 1))}                  & oot  & 0.18\tiny$\pm$0.04 &  \\
  & \tiny\texttt{F G (x < 1)}                  & oot  & 0.0               &  \\ \midrule
% --- Group: P21 ---
\multirow[t]{2}{*}{P21} 
  & \tiny\texttt{G F w != 1}                   & <0.1 & 0.27\tiny$\pm$0.05 & 0.37\tiny$\pm$0.00 \\
  & \tiny\texttt{G F w = 5}                    & <0.1 & 0.71\tiny$\pm$0.03 &  \\ \midrule
% --- Group: P25 ---
\multirow[t]{4}{*}{P25} 
  & \tiny\texttt{G c > 5}                      & oot  & 0.34\tiny$\pm$0.05 & 0.31\tiny$\pm$0.00 \\
  & \tiny\texttt{F c < 5}                      & <0.1 & 0.2\tiny$\pm$0.0   &  \\
  & \tiny\texttt{G F r >= 5}                   & <0.1 & 0.98\tiny$\pm$0.06 &  \\
  & \tiny\texttt{F G c > 5}                    & <0.1 & 0.39\tiny$\pm$0.03 &  \\ \midrule
% --- Group: two-robots ---
\multirow[t]{3}{*}{two-robots} 
  & \tiny\texttt{x\_1 >= 1 -> G(!clash)}       & 0.02 & 0.38\tiny$\pm$0.04 & 0.52\tiny$\pm$0.16 \\
  & \tiny\texttt{y\_2 >= 1 -> G(!clash)}       & 0.02 & 0.13\tiny$\pm$0.05 &  \\
  & \tiny\texttt{F ((G clash) | (G !clash))}     & 0.03 & 28.23\tiny$\pm$0.11 &  \\ \midrule
% --- Group: two-robots-actions ---
\multirow[t]{3}{*}{two-robots-actions} 
  & \tiny\texttt{x\_1 >= 1 -> G(!clash)}       & 0.02 & 0.44\tiny$\pm$0.05 & 1.0\tiny$\pm$0.2 \\
  & \tiny\texttt{y\_2 >= 1 -> G(!clash)}       & 0.02 & 0.17\tiny$\pm$0.05 &  \\
  & \tiny\texttt{F ((G clash) | (G !clash))}     & 0.03 & 16.22\tiny$\pm$0.19 &  \\ \midrule
% --- Group: two-robots-quadratic ---
\multirow[t]{3}{*}{two-robots-quadratic} 
  & \tiny\texttt{x\_1 >= 1 -> G(!clash)}       & 0.02 & 0.15\tiny$\pm$0.05 & 0.9\tiny$\pm$0.27 \\
  & \tiny\texttt{y\_2 >= 1 -> G(!clash)}       & 0.02 & 0.15\tiny$\pm$0.05 &  \\
  & \tiny\texttt{F ((G clash) | (G !clash))}     & 0.03 & 5.89\tiny$\pm$0.07  & 
  \label{tab:ltlinf}
\end{longtable}

\newcolumntype{E}{>{\centering\arraybackslash}m{2.6cm}}
\newcolumntype{F}{>{\centering\arraybackslash}m{3cm}}
\begin{longtable}{E F F N}
\caption{Performance Comparison of T2 and Branching Bisimulation Learning on Non-Deterministic Infinite-State Program Benchmarks.}\\[1mm]
\toprule
Benchmark & $\phi$ & \textbf{T2} & \textbf{\small Bisimulation Learning} \\
\midrule
\endfirsthead
\toprule
Benchmark & $\phi$ & \textbf{T2} & \textbf{\small Bisimulation Learning} \\
\midrule
\endhead
% --- term-loop-nd ---
term-loop-nd 
    & \tiny\texttt{G(varX > 1 || varX < -1)} & 0.30\tiny$\pm$0.00 & 0.25\tiny$\pm$0.04 \\
    & \tiny\texttt{F(G(varX <= 1) || G(varX >= -1))} & 5.33\tiny$\pm$0.68 & \\
    & \tiny\texttt{G(F(varX > 1))} & oot & \\
    & \tiny\texttt{[EG](varX > 1 || varX < -1)} & 0.62\tiny$\pm$0.00 & \\
    & \tiny\texttt{!([EG](varX > 1 || varX < -1))} & 0.62\tiny$\pm$0.00 & \\
    & \tiny\texttt{[AF]([AG](varX <= 1) \&\& [AG](varX >= -1))} & 0.59\tiny$\pm$0.00 & \\
    & \tiny\texttt{A F(G(F(varX <= 1)) || G(F(varX >= -1)))} & oot & \\
\midrule
% --- term-loop-nd-2 ---
term-loop-nd-2 
    & \tiny\texttt{G(F(varX == 0))} & 1.89\tiny$\pm$0.00 & 0.78\tiny$\pm$0.32 \\
    & \tiny\texttt{F(G(varX <= 1) || G(varX >= -1))} & oot & \\
    & \tiny\texttt{[EG]([AF](varX == 0))} & 1.16\tiny$\pm$0.13 & \\
    & \tiny\texttt{!([EG]([AF](varX == 0)))} & oot & \\
    & \tiny\texttt{[AF]([AG](varX <= 1) \&\& [AG](varX >= -1))} & 1.64\tiny$\pm$0.01 & \\
    & \tiny\texttt{A F(E G( F(varX <= 1)) \&\& A G(F(varX >= -1)))} & oot & \\
\midrule
% --- term-loop-nd-y ---
term-loop-nd-y 
    & \tiny\texttt{[EG]([AF](varX == 0))} & 1.09\tiny$\pm$0.01 & oot \\
    & \tiny\texttt{!([EG]([AF](varX == 0)))} & 1.12\tiny$\pm$0.00 & \\
    & \tiny\texttt{G(F(varX == 0))} & 1.50\tiny$\pm$0.01 & \\
    & \tiny\texttt{E G(F(varX == 0) \&\& F(G(varX != 0)))} & 7.86\tiny$\pm$0.27 & \\
\midrule
% --- P1 ---
P1 
    & \tiny\texttt{[AG](varA != 1 || [AF](varR == 1))} & 0.68\tiny$\pm$0.00 & 0.56\tiny$\pm$0.05 \\
    & \tiny\texttt{!([AG](varA != 1 || [AF](varR == 1)))} & 0.68\tiny$\pm$0.00 & \\
    & \tiny\texttt{G(varA != 1 || F(varR == 1))} & 1.04\tiny$\pm$0.05 & \\
    & \tiny\texttt{E G(F(varA != 1 || F(varR == 1)))} & 2.33\tiny$\pm$0.06 & \\
\midrule
% --- P2 ---
P2 
    & \tiny\texttt{[EF](varA == 1 \&\& [EG](varR != 5))} & 0.82\tiny$\pm$0.00 & 0.53\tiny$\pm$0.06 \\
    & \tiny\texttt{!([EF](varA == 1 \&\& [EG](varR != 5)))} & 0.81\tiny$\pm$0.01 & \\
    & \tiny\texttt{F(varA == 1 \&\& G(varR != 5))} & 1.06\tiny$\pm$0.01 & \\
    & \tiny\texttt{E G (F(varA == 1 \&\& E G(varR != 5)))} & oot & \\
\midrule
% --- P3 ---
P3 
    & \tiny\texttt{[AG](varA != 1 || [EF](varR == 1))} & 0.37\tiny$\pm$0.00 & 0.24\tiny$\pm$0.00 \\
    & \tiny\texttt{!([AG](varA != 1 || [EF](varR == 1)))} & 0.37\tiny$\pm$0.00 & \\
    & \tiny\texttt{G(varA != 1 || F(varR == 1))} & 0.97\tiny$\pm$0.00 & \\
    & \tiny\texttt{E F( G(varA != 1 || A G(varR == 1)))} & 8.58\tiny$\pm$0.04 & \\
\midrule
% --- P4 ---
P4 
    & \tiny\texttt{[EF](varA == 1 \&\& [AG](varR != 1))} & oot & 1.31\tiny$\pm$0.37 \\
    & \tiny\texttt{![EF](varA == 1 \&\& [AG](varR != 1))} & oot & \\
    & \tiny\texttt{F(varA == 1 \&\& G(varR != 1))} & 0.58\tiny$\pm$0.00 & \\
    & \tiny\texttt{E G (F(varA == 1 \&\& E G(varR != 1)))} & 2.02\tiny$\pm$0.05 & \\
\midrule
% --- P5 ---
P5 
    & \tiny\texttt{[AG](varS != 1 || [AF](varU == 1))} & 0.53\tiny$\pm$0.00 & 0.81\tiny$\pm$0.11 \\
    & \tiny\texttt{!([AG](varS != 1 || [AF](varU == 1)))} & 0.52\tiny$\pm$0.00 & \\
    & \tiny\texttt{G(varS != 1 || F(varU == 1))} & oot & \\
    & \tiny\texttt{E F (G(varS != 1 || A F(varU == 1)))} & 0.42\tiny$\pm$0.00 & \\
\midrule
% --- P6 ---
P6 
    & \tiny\texttt{[EF](varS == 1 || [EG](varU != 1))} & 0.53\tiny$\pm$0.00 & 0.28\tiny$\pm$0.00 \\
    & \tiny\texttt{!([EF](varS == 1 || [EG](varU != 1)))} & 0.53\tiny$\pm$0.00 & \\
    & \tiny\texttt{F(varS == 1 || G(varU != 1))} & 1.25\tiny$\pm$0.01 & \\
    & \tiny\texttt{E G (F(varS == 1 \&\& E F(varU != 1)))} & 0.61\tiny$\pm$0.00 & \\
\midrule
% --- P7 ---
P7 
    & \tiny\texttt{[AG](varS != 1 || [EF](varU == 1))} & 0.47\tiny$\pm$0.00 & 0.28\tiny$\pm$0.00 \\
    & \tiny\texttt{!([AG](varS != 1 || [EF](varU == 1)))} & 0.45\tiny$\pm$0.00 & \\
    & \tiny\texttt{G(varS != 1 || F(varU == 1))} & 0.46\tiny$\pm$0.00 & \\
    & \tiny\texttt{E G(varS != 1 || F(varU == 1) || E F(varU == 1))} & 1.25\tiny$\pm$0.10 & \\
\midrule
% --- P17 ---
P17 
    & \tiny\texttt{[AG]([AF](varW >= 1))} & 0.58\tiny$\pm$0.00 & 0.36\tiny$\pm$0.06 \\
    & \tiny\texttt{!([AG]([AF](varW >= 1)))} & 0.58\tiny$\pm$0.00 & \\
    & \tiny\texttt{G(F(varW >= 1))} & 0.85\tiny$\pm$0.00 & \\
    & \tiny\texttt{E G( F(varW >= 1) || A G (E F(varW >= 1)))} & 2.25\tiny$\pm$0.03 & \\
\midrule
% --- P18 ---
P18 
    & \tiny\texttt{[EF]([EG](varW < 1))} & 0.75\tiny$\pm$0.00 & 0.28\tiny$\pm$0.03 \\
    & \tiny\texttt{!([EF]([EG](varW < 1)))} & 0.75\tiny$\pm$0.00 & \\
    & \tiny\texttt{F(G(varW < 1))} & 3.57\tiny$\pm$0.01 & \\
    & \tiny\texttt{E F(F(varW >= 1)) || A F (E G(varW < 1))} & 2.54\tiny$\pm$0.02 & \\
\midrule
% --- P19 ---
P19 
    & \tiny\texttt{[AG]([EF](varW >=1))} & 3.64\tiny$\pm$0.01 & 0.33\tiny$\pm$0.03 \\
    & \tiny\texttt{!([AG]([EF](varW >=1)))} & 3.65\tiny$\pm$0.01 & \\
    & \tiny\texttt{G(F(varW >=1))} & oot & \\
    & \tiny\texttt{E G(F(varW >=1) || E F (A G(varW >=1)))} & oot & \\
\midrule
% --- P20 ---
P20 
    & \tiny\texttt{[EF]([AG](varW < 1))} & 0.73\tiny$\pm$0.00 & 0.36\tiny$\pm$0.04 \\
    & \tiny\texttt{!([EF]([AG](varW < 1)))} & 0.88\tiny$\pm$0.00 & \\
    & \tiny\texttt{F(G(varW < 1))} & oot & \\
    & \tiny\texttt{E F(G(varW >= 1)) || A G(F(varW >= 1))} & 71.80\tiny$\pm$0.59 & \\
\midrule
% --- P21 ---
P21 
    & \tiny\texttt{[AG]([AF](varW == 1))} & 0.43\tiny$\pm$0.11 & 0.31\tiny$\pm$0.00 \\
    & \tiny\texttt{!([AG]([AF](varW == 1)))} & 0.39\tiny$\pm$0.00 & \\
    & \tiny\texttt{G(F(varW == 1))} & 0.44\tiny$\pm$0.01 & \\
    & \tiny\texttt{E F(G(varW >= 1)) || A G(F(varW >= 1))} & oot & \\
\midrule
% --- P22 ---
P22 
    & \tiny\texttt{[EF]([EG](varW != 1))} & 0.78\tiny$\pm$0.01 & 0.33\tiny$\pm$0.00 \\
    & \tiny\texttt{!([EF]([EG](varW != 1)))} & 0.78\tiny$\pm$0.01 & \\
    & \tiny\texttt{F(G(varW != 1))} & 0.73\tiny$\pm$0.01 & \\
    & \tiny\texttt{E F(G(varW >= 1)) || A G(F(varW >= 1))} & oot & \\
\midrule
% --- P23 ---
P23 
    & \tiny\texttt{[AG]([EF](varW == 1))} & 0.43\tiny$\pm$0.00 & 0.68\tiny$\pm$0.09 \\
    & \tiny\texttt{!([AG]([EF](varW == 1)))} & 0.42\tiny$\pm$0.00 & \\
    & \tiny\texttt{G(F(varW == 1))} & 3.14\tiny$\pm$0.01 & \\
    & \tiny\texttt{E F(G(varW >= 1)) || A G(F(varW >= 1))} & 3.12\tiny$\pm$0.05 & \\
\midrule
% --- P24 ---
P24 
    & \tiny\texttt{[EF]([AG](varW != 1))} & 0.41\tiny$\pm$0.00 & 0.08\tiny$\pm$0.00 \\
    & \tiny\texttt{!([EF]([AG](varW != 1)))} & 0.42\tiny$\pm$0.00 & \\
    & \tiny\texttt{F(G(varW != 1))} & 0.96\tiny$\pm$0.00 & \\
    & \tiny\texttt{E F(G(varW >= 1)) || A G(F(varW >= 1))} & 12.05\tiny$\pm$0.16 & \\
\midrule
% --- P25 ---
P25 
    & \tiny\texttt{(varC > 5) -> ([AF](varR > 5))} & 0.18\tiny$\pm$0.00 & 0.24\tiny$\pm$0.00 \\
    & \tiny\texttt{(varC > 5) \&\& ![AF](varR > 5)} & oot & \\
    & \tiny\texttt{(varC > 5) -> (F(varR > 5))} & 0.22\tiny$\pm$0.00 & \\
    & \tiny\texttt{E G (F(varC == 5) || G(varC > 5) \&\& F(G(varR > 5)))} & 2.92\tiny$\pm$0.30 & \\
\midrule
% --- P26 ---
P26 
    & \tiny\texttt{(varC > 5) \&\& [EG](varR <= 5)} & 0.52\tiny$\pm$0.00 & 0.24\tiny$\pm$0.00 \\
    & \tiny\texttt{!((varC > 5) \&\& [EG](varR <= 5))} & 0.52\tiny$\pm$0.00 & \\
    & \tiny\texttt{(varC > 5) \&\& G(varR <= 5)} & 0.54\tiny$\pm$0.00 & \\
    & \tiny\texttt{E G (F(varC == 5) || G(varC > 5) \&\& F(G(varR > 5)))} & 1.28\tiny$\pm$0.01 & \\
\midrule
% --- P27 ---
P27 
    & \tiny\texttt{(varC <= 5) || [EF](varR > 5)} & 0.64\tiny$\pm$0.00 & 0.24\tiny$\pm$0.00 \\
    & \tiny\texttt{!((varC <= 5) || [EF](varR > 5))} & 0.65\tiny$\pm$0.00 & \\
    & \tiny\texttt{(varC <= 5) || F(varR > 5)} & 0.67\tiny$\pm$0.14 & \\
    & \tiny\texttt{E G (F(varC == 5) || G(varC > 5) \&\& F(G(varR > 5)))} & oot & \\
\midrule
% --- P28 ---
P28 
    & \tiny\texttt{(varC > 5) \&\& [AG](varR <= 5)} & 0.43\tiny$\pm$0.00 & 0.21\tiny$\pm$0.00 \\
    & \tiny\texttt{!((varC > 5) \&\& [AG](varR <= 5))} & 0.42\tiny$\pm$0.00 & \\
    & \tiny\texttt{(varC > 5) \&\& G(varR <= 5)} & 0.48\tiny$\pm$0.00 & \\
    & \tiny\texttt{E G (F(varC == 5) || G(varC > 5) \&\& F(G(varR > 5)))} & oot & \\
\bottomrule
\label{tab:ctlinf}
\end{longtable}

\subsection{Non-deterministic Finite-state Benchmarks}

Table~\ref{tab:ctlfin} presents results for finite-state versions of the non-deterministic benchmarks from Tables~\ref{tab:ltlinf} and~\ref{tab:ctlinf}. We evaluate multiple state-space sizes determined by variable ranges. As these benchmarks are finite-state, we compare against the nuXmv model checker using symbolic model checking with BDDs. We only report a single runtime for branching bisimulation learning as we synthesise the quotient for the infinite-state system with unbounded variables, subsuming all finite-state versions.

\begin{longtable}{N N E N N}
\caption{Finite state CTL evaluations.}\\[1mm]
\toprule
Benchmark & $|S|$ & $\varphi$ & \textbf{\small nuXmv} \small (BDD) & \textbf{\small Bisimulation Learning} \\
\midrule
\endfirsthead
\toprule
Benchmark & $|S|$ & $\varphi$ & \textbf{\small nuXmv} \small (BDD) & \textbf{\small Bisimulation Learning} \\
\midrule
\endhead
% --- Group: term-loop-nd (3 formulas, 5 rows each) ---
\multirow[t]{15}{*}{term-loop-nd} 
    & $2^{9}$  & {\tiny\texttt{EG(x > 1 | x < -1)}} & 0.02\tiny$\pm$0.00  & 0.42\tiny$\pm$0.12 \\[1mm]
    & $2^{11}$ &                                      & 0.32\tiny$\pm$0.00  &  \\[1mm]
    & $2^{13}$ &                                      & 4.40\tiny$\pm$0.03  &  \\[1mm]
    & $2^{15}$ &                                      & 76.66\tiny$\pm$4.08  &  \\[1mm]
    & $2^{17}$ &                                      & oot                &  \\[1mm]
    & $2^{9}$  & {\tiny\texttt{!(EG(x > 1 | x < -1))}}& 0.02\tiny$\pm$0.00  &  \\[1mm]
    & $2^{11}$ &                                      & 0.32\tiny$\pm$0.00  &  \\[1mm]
    & $2^{13}$ &                                      & 4.55\tiny$\pm$0.06  &  \\[1mm]
    & $2^{15}$ &                                      & 77.25\tiny$\pm$0.54  &  \\[1mm]
    & $2^{17}$ &                                      & oot                &  \\[1mm]
    & $2^{9}$  & {\tiny\texttt{AF(AG(x $\le$ 1) \& AG(x $\ge$ -1))}} & 0.02\tiny$\pm$0.00  &  \\[1mm]
    & $2^{11}$ &                                      & 0.32\tiny$\pm$0.00  &  \\[1mm]
    & $2^{13}$ &                                      & 4.32\tiny$\pm$0.04  &  \\[1mm]
    & $2^{15}$ &                                      & 73.92\tiny$\pm$0.69  &  \\[1mm]
    & $2^{17}$ &                                      & oot                &  \\
\midrule
% --- Group: term-loop-nd-2 (3 formulas) ---
\multirow[t]{15}{*}{term-loop-nd-2} 
    & $2^{9}$  & {\tiny\texttt{EG(AF(x = 0))}}          & 0.03\tiny$\pm$0.00  & 1.17\tiny$\pm$0.51 \\[1mm]
    & $2^{11}$ &                                      & 0.67\tiny$\pm$0.01  &  \\[1mm]
    & $2^{13}$ &                                      & 10.28\tiny$\pm$0.06 &  \\[1mm]
    & $2^{15}$ &                                      & 176.68\tiny$\pm$10.54&  \\[1mm]
    & $2^{17}$ &                                      & oot                &  \\[1mm]
    & $2^{9}$  & {\tiny\texttt{!(EG(AF(x = 0)))} }       & 0.03\tiny$\pm$0.00  &  \\[1mm]
    & $2^{11}$ &                                      & 0.68\tiny$\pm$0.01  &  \\[1mm]
    & $2^{13}$ &                                      & 10.34\tiny$\pm$0.07 &  \\[1mm]
    & $2^{15}$ &                                      & 175.41\tiny$\pm$4.20 &  \\[1mm]
    & $2^{17}$ &                                      & oot                &  \\[1mm]
    & $2^{9}$  & {\tiny\texttt{AF(AG(x $\le$ 1) \& AG(x $\ge$ -1))}} & 0.03\tiny$\pm$0.00  &  \\[1mm]
    & $2^{11}$ &                                      & 0.68\tiny$\pm$0.01  &  \\[1mm]
    & $2^{13}$ &                                      & 9.90\tiny$\pm$0.07  &  \\[1mm]
    & $2^{15}$ &                                      & 168.72\tiny$\pm$1.14 &  \\[1mm]
    & $2^{17}$ &                                      & oot                &  \\
\midrule
% --- Group: term-loop-nd-y (3 formulas) ---
\multirow[t]{15}{*}{term-loop-nd-y} 
    & $2^{9}$  & {\tiny\texttt{EG(AF(x = 0))}}          & 0.05\tiny$\pm$0.00  & 1.14\tiny$\pm$0.30 \\[1mm]
    & $2^{11}$ &                                      & 0.64\tiny$\pm$0.01  &  \\[1mm]
    & $2^{13}$ &                                      & 12.55\tiny$\pm$3.43 &  \\[1mm]
    & $2^{15}$ &                                      & 153.32\tiny$\pm$0.58&  \\[1mm]
    & $2^{17}$ &                                      & oot                &  \\[1mm]
    & $2^{9}$  & {\tiny\texttt{!(EG(AF(x = 0)))} }       & 0.08\tiny$\pm$0.00  &  \\[1mm]
    & $2^{11}$ &                                      & 1.19\tiny$\pm$0.01  &  \\[1mm]
    & $2^{13}$ &                                      & 23.12\tiny$\pm$5.23 &  \\[1mm]
    & $2^{15}$ &                                      & 353.75\tiny$\pm$3.38&  \\[1mm]
    & $2^{17}$ &                                      & oot                &  \\[1mm]
    & $2^{9}$  & {\tiny\texttt{AF(AG(x $\le$ 1) \& AG(x $\ge$ -1))}} & 0.05\tiny$\pm$0.00  &  \\[1mm]
    & $2^{11}$ &                                      & 0.61\tiny$\pm$0.01  &  \\[1mm]
    & $2^{13}$ &                                      & 11.90\tiny$\pm$0.08 &  \\[1mm]
    & $2^{15}$ &                                      & 153.13\tiny$\pm$3.16 &  \\[1mm]
    & $2^{17}$ &                                      & oot                &  \\
\midrule
% --- Group: P1 (2 formulas) ---
\multirow[t]{10}{*}{P1} 
    & $2^{9}$  & {\tiny\texttt{AG(a $\neq$ 1 | AF(r = 1))}} & 0.04\tiny$\pm$0.00  & 1.09\tiny$\pm$0.34 \\[1mm]
    & $2^{11}$ &                                      & 0.51\tiny$\pm$0.01  &  \\[1mm]
    & $2^{13}$ &                                      & 7.38\tiny$\pm$0.06  &  \\[1mm]
    & $2^{15}$ &                                      & 125.00\tiny$\pm$1.08 &  \\[1mm]
    & $2^{17}$ &                                      & oot                &  \\[1mm]
    & $2^{9}$  & {\tiny\texttt{!(AG(a $\neq$ 1 | AF(r = 1)))} } & 0.04\tiny$\pm$0.00  &  \\[1mm]
    & $2^{11}$ &                                      & 0.50\tiny$\pm$0.02  &  \\[1mm]
    & $2^{13}$ &                                      & 9.04\tiny$\pm$2.59  &  \\[1mm]
    & $2^{15}$ &                                      & 171.09\tiny$\pm$54.25 &  \\[1mm]
    & $2^{17}$ &                                      & oot                &  \\
\midrule
% --- Group: P2 (2 formulas) ---
\multirow[t]{10}{*}{P2} 
    & $2^{9}$  & {\tiny\texttt{EF(a = 1 \& EG(r $\neq$ 5))}} & 0.04\tiny$\pm$0.00  & 1.01\tiny$\pm$0.22 \\[1mm]
    & $2^{11}$ &                                      & 0.51\tiny$\pm$0.01  &  \\[1mm]
    & $2^{13}$ &                                      & 13.89\tiny$\pm$0.10 &  \\[1mm]
    & $2^{15}$ &                                      & 174.70\tiny$\pm$54.21 &  \\[1mm]
    & $2^{17}$ &                                      & oot                &  \\[1mm]
    & $2^{9}$  & {\tiny\texttt{!EF(a = 1 \& EG(r $\neq$ 5))}} & 0.04\tiny$\pm$0.00  &  \\[1mm]
    & $2^{11}$ &                                      & 0.49\tiny$\pm$0.02  &  \\[1mm]
    & $2^{13}$ &                                      & 13.69\tiny$\pm$0.07 &  \\[1mm]
    & $2^{15}$ &                                      & 158.67\tiny$\pm$49.57 &  \\[1mm]
    & $2^{17}$ &                                      & oot                &  \\
\midrule
% --- Group: P3 (2 formulas) ---
\multirow[t]{10}{*}{P3} 
    & $2^{9}$  & {\tiny\texttt{AG(a $\neq$ 1 | EF(r = 1))}} & 0.04\tiny$\pm$0.00  & 0.30\tiny$\pm$0.00 \\[1mm]
    & $2^{11}$ &                                      & 0.28\tiny$\pm$0.00  &  \\[1mm]
    & $2^{13}$ &                                      & 4.02\tiny$\pm$0.39  &  \\[1mm]
    & $2^{15}$ &                                      & 65.09\tiny$\pm$0.39 &  \\[1mm]
    & $2^{17}$ &                                      & oot                &  \\[1mm]
    & $2^{9}$  & {\tiny\texttt{!(AG(a $\neq$ 1 | EF(r = 1)))} } & 0.03\tiny$\pm$0.00  &  \\[1mm]
    & $2^{11}$ &                                      & 0.29\tiny$\pm$0.01  &  \\[1mm]
    & $2^{13}$ &                                      & 8.05\tiny$\pm$1.11  &  \\[1mm]
    & $2^{15}$ &                                      & 86.17\tiny$\pm$23.33 &  \\[1mm]
    & $2^{17}$ &                                      & oot                &  \\
\midrule
% --- Group: P4 (2 formulas) ---
\multirow[t]{10}{*}{P4} 
    & $2^{9}$  & {\tiny\texttt{EF(a = 1 \& AG(r $\neq$ 1))}} & 0.05\tiny$\pm$0.00  & oot  \\[1mm]
    & $2^{11}$ &                                      & 0.45\tiny$\pm$0.01  &  \\[1mm]
    & $2^{13}$ &                                      & 5.93\tiny$\pm$0.04  & \\[1mm]
    & $2^{15}$ &                                      & 101.30\tiny$\pm$4.95 &  \\[1mm]
    & $2^{17}$ &                                      & oot                &  \\[1mm]
    & $2^{9}$  & {\tiny\texttt{!(EF(a = 1 \& AG(r $\neq$ 1)))} } & 0.05\tiny$\pm$0.00  &  \\[1mm]
    & $2^{11}$ &                                      & 0.40\tiny$\pm$0.02  &  \\[1mm]
    & $2^{13}$ &                                      & 5.91\tiny$\pm$0.05  &  \\[1mm]
    & $2^{15}$ &                                      & 98.77\tiny$\pm$4.72  &  \\[1mm]
    & $2^{17}$ &                                      & oot                &  \\
\midrule
% --- Group: P5 (2 formulas) ---
\multirow[t]{10}{*}{P5} 
    & $2^{9}$  & {\tiny\texttt{AG(s $\neq$ 1 | AF(u = 1))}}  & 0.08\tiny$\pm$0.00  & 1.78\tiny$\pm$0.35  \\[1mm]
    & $2^{11}$ &                                      & 1.07\tiny$\pm$0.01  &  \\[1mm]
    & $2^{13}$ & {\tiny\texttt{EG(a = 1 \& EF(r $\neq$ 1))}}  & 6.23\tiny$\pm$0.09  & \\[1mm]
    & $2^{15}$ &                                      & 109.82\tiny$\pm$9.69 &  \\[1mm]
    & $2^{17}$ &                                      & oot                &  \\[1mm]
    & $2^{9}$  & {\tiny\texttt{!(AG(s $\neq$ 1 | AF(u = 1)))} } & 0.09\tiny$\pm$0.00  &  \\[1mm]
    & $2^{11}$ &                                      & 1.12\tiny$\pm$0.02  &  \\[1mm]
    & $2^{13}$ & {\tiny\texttt{!(EG(a = 1 \& EF(r $\neq$ 1)))} } & 6.02\tiny$\pm$0.08  &  \\[1mm]
    & $2^{15}$ &                                      & 102.62\tiny$\pm$4.09 &  \\[1mm]
    & $2^{17}$ &                                      & oot                &  \\
\midrule
% --- Group: P6 (2 formulas) ---
\multirow[t]{10}{*}{P6} 
    & $2^{9}$  & {\tiny\texttt{EF(s = 1 | EG(u $\neq$ 1))}}  & 0.08\tiny$\pm$0.01  &  0.36\tiny$\pm$0.01 \\[1mm]
    & $2^{11}$ &                                      & 1.30\tiny$\pm$0.02  &  \\[1mm]
    & $2^{13}$ & {\tiny\texttt{EF(a = 1 \& AG(r $\neq$ 1))}}  & 5.83\tiny$\pm$0.04  & \\[1mm]
    & $2^{15}$ &                                      & 94.51\tiny$\pm$6.33  &  \\[1mm]
    & $2^{17}$ &                                      & oot                &  \\[1mm]
    & $2^{9}$  & {\tiny\texttt{!(EF(s = 1 | EG(u $\neq$ 1)))} } & 0.08\tiny$\pm$0.00  &  \\[1mm]
    & $2^{11}$ &                                      & 1.30\tiny$\pm$0.03  &  \\[1mm]
    & $2^{13}$ & {\tiny\texttt{!(EF(a = 1 \& AG(r $\neq$ 1)))} } & 5.73\tiny$\pm$0.07  &  \\[1mm]
    & $2^{15}$ &                                      & 92.48\tiny$\pm$5.61  &  \\[1mm]
    & $2^{17}$ &                                      & oot                &  \\
\midrule
% --- Group: P7 (2 formulas) ---
\multirow[t]{10}{*}{P7} 
    & $2^{9}$  & {\tiny\texttt{AG(s $\neq$ 1 | EF(u = 1))}}   & 0.08\tiny$\pm$0.00  & 0.36\tiny$\pm$0.00 \\[1mm]
    & $2^{11}$ &                                      & 1.30\tiny$\pm$0.01  &  \\[1mm]
    & $2^{13}$ & {\tiny\texttt{AG(a $\neq$ 1 \& EF(r = 1))}}  & 4.62\tiny$\pm$0.10  &  \\[1mm]
    & $2^{15}$ &                                      & 75.24\tiny$\pm$3.40  &  \\[1mm]
    & $2^{17}$ &                                      & oot                &  \\[1mm]
    & $2^{9}$  & {\tiny\texttt{!(AG(s $\neq$ 1 | EF(u = 1)))} } & 0.08\tiny$\pm$0.00  &  \\[1mm]
    & $2^{11}$ &                                      & 1.33\tiny$\pm$0.03  &  \\[1mm]
    & $2^{13}$ & {\tiny\texttt{!(AG(a $\neq$ 1 \& EF(r = 1)))} } & 8.39\tiny$\pm$0.13  &  \\[1mm]
    & $2^{15}$ &                                      & 85.20\tiny$\pm$9.46  &  \\[1mm]
    & $2^{17}$ &                                      & oot                &  \\
\midrule
% --- Group: P17 (2 formulas) ---
\multirow[t]{10}{*}{P17} 
    & $2^{9}$  & {\tiny\texttt{AG(AF(x $\ge$ 1))}}          & 0.03\tiny$\pm$0.00  & 0.53\tiny$\pm$0.09 \\[1mm]
    & $2^{11}$ &                                      & 0.18\tiny$\pm$0.00  &  \\[1mm]
    & $2^{13}$ &                                      & 9.80\tiny$\pm$0.20  &  \\[1mm]
    & $2^{15}$ &                                      & 159.03\tiny$\pm$4.76 &  \\[1mm]
    & $2^{17}$ &                                      & oot                &  \\[1mm]
    & $2^{9}$  & {\tiny\texttt{!(AG(AF(x $\ge$ 1)))} }       & 0.03\tiny$\pm$0.00  &  \\[1mm]
    & $2^{11}$ &                                      & 0.19\tiny$\pm$0.00  &  \\[1mm]
    & $2^{13}$ &                                      & 12.12\tiny$\pm$0.44 &  \\[1mm]
    & $2^{15}$ &                                      & 168.02\tiny$\pm$13.92 &  \\[1mm]
    & $2^{17}$ &                                      & oot                &  \\
\midrule
% --- Group: P18 (2 formulas) ---
\multirow[t]{10}{*}{P18} 
    & $2^{9}$  & {\tiny\texttt{EF(EG(x < 1))}}             & 0.03\tiny$\pm$0.00  & 0.42\tiny$\pm$0.01  \\[1mm]
    & $2^{11}$ &                                      & 0.19\tiny$\pm$0.01  &  \\[1mm]
    & $2^{13}$ &                                      & 9.87\tiny$\pm$0.27  & \\[1mm]
    & $2^{15}$ &                                      & 161.75\tiny$\pm$2.16 &  \\[1mm]
    & $2^{17}$ &                                      & oot                &  \\[1mm]
    & $2^{9}$  & {\tiny\texttt{!(EF(EG(x < 1)))} }          & 0.03\tiny$\pm$0.00  &  \\[1mm]
    & $2^{11}$ &                                      & 0.18\tiny$\pm$0.00  &  \\[1mm]
    & $2^{13}$ &                                      & 9.73\tiny$\pm$0.34  &  \\[1mm]
    & $2^{15}$ &                                      & 160.55\tiny$\pm$2.19 &  \\[1mm]
    & $2^{17}$ &                                      & oot                &  \\
\midrule
% --- Group: P19 (2 formulas) ---
\multirow[t]{10}{*}{P19} 
    & $2^{9}$  & {\tiny\texttt{AG(EF(x $\ge$ 1))}}          & 0.03\tiny$\pm$0.00  & 0.54\tiny$\pm$0.12 \\[1mm]
    & $2^{11}$ &                                      & 0.20\tiny$\pm$0.00  &  \\[1mm]
    & $2^{13}$ &                                      & 11.84\tiny$\pm$0.32 &  \\[1mm]
    & $2^{15}$ &                                      & 194.40\tiny$\pm$3.21 &  \\[1mm]
    & $2^{17}$ &                                      & oot                &  \\[1mm]
    & $2^{9}$  & {\tiny\texttt{!(AG(EF(x $\ge$ 1)))} }       & 0.03\tiny$\pm$0.00  &  \\[1mm]
    & $2^{11}$ &                                      & 0.20\tiny$\pm$0.00  &  \\[1mm]
    & $2^{13}$ &                                      & 11.81\tiny$\pm$0.35 &  \\[1mm]
    & $2^{15}$ &                                      & 198.88\tiny$\pm$3.56 &  \\[1mm]
    & $2^{17}$ &                                      & oot                &  \\
\midrule
% --- Group: P20 (2 formulas) ---
\multirow[t]{10}{*}{P20} 
    & $2^{9}$  & {\tiny\texttt{EF(AG(x < 1))}}             & 0.03\tiny$\pm$0.00  & 0.67\tiny$\pm$0.10 \\[1mm]
    & $2^{11}$ &                                      & 0.20\tiny$\pm$0.01  &  \\[1mm]
    & $2^{13}$ &                                      & 11.84\tiny$\pm$0.19 &  \\[1mm]
    & $2^{15}$ &                                      & 190.75\tiny$\pm$2.87 &  \\[1mm]
    & $2^{17}$ &                                      & oot                &  \\[1mm]
    & $2^{9}$  & {\tiny\texttt{!(EF(AG(x < 1)))} }          & 0.03\tiny$\pm$0.00  &  \\[1mm]
    & $2^{11}$ &                                      & 0.20\tiny$\pm$0.00  &  \\[1mm]
    & $2^{13}$ &                                      & 13.64\tiny$\pm$0.67 &  \\[1mm]
    & $2^{15}$ &                                      & 192.30\tiny$\pm$4.08 &  \\[1mm]
    & $2^{17}$ &                                      & oot                &  \\
\midrule
% --- Group: P21 (2 formulas) ---
\multirow[t]{10}{*}{P21} 
    & $2^{9}$  & {\tiny\texttt{AG(AF(w = 1))}}             & 0.02\tiny$\pm$0.00  & 0.37\tiny$\pm$0.00 \\[1mm]
    & $2^{11}$ &                                      & 0.09\tiny$\pm$0.00  &  \\[1mm]
    & $2^{13}$ &                                      & 4.98\tiny$\pm$0.34  &  \\[1mm]
    & $2^{15}$ &                                      & 28.79\tiny$\pm$22.88 &  \\[1mm]
    & $2^{17}$ &                                      & oot                &  \\[1mm]
    & $2^{9}$  & {\tiny\texttt{!(AG(AF(w = 1)))} }          & 0.02\tiny$\pm$0.00  &  \\[1mm]
    & $2^{11}$ &                                      & 0.10\tiny$\pm$0.00  &  \\[1mm]
    & $2^{13}$ &                                      & 1.03\tiny$\pm$0.00  &  \\[1mm]
    & $2^{15}$ &                                      & 16.23\tiny$\pm$0.17 &  \\[1mm]
    & $2^{17}$ &                                      & 258.78\tiny$\pm$1.51  &  \\
\midrule
% --- Group: P22 (2 formulas) ---
\multirow[t]{10}{*}{P22} 
    & $2^{9}$  & {\tiny\texttt{EF(EG(w $\neq$ 1))}}          & 0.02\tiny$\pm$0.00  & 0.68\tiny$\pm$0.06 \\[1mm]
    & $2^{11}$ &                                      & 0.10\tiny$\pm$0.00  &  \\[1mm]
    & $2^{13}$ &                                      & 1.02\tiny$\pm$0.00  &  \\[1mm]
    & $2^{15}$ &                                      & 19.26\tiny$\pm$6.30  &  \\[1mm]
    & $2^{17}$ &                                      & oot                &  \\[1mm]
    & $2^{9}$  & {\tiny\texttt{!(EF(EG(w $\neq$ 1)))} }       & 0.02\tiny$\pm$0.00  &  \\[1mm]
    & $2^{11}$ &                                      & 0.10\tiny$\pm$0.00  &  \\[1mm]
    & $2^{13}$ &                                      & 1.09\tiny$\pm$0.07  &  \\[1mm]
    & $2^{15}$ &                                      & 19.86\tiny$\pm$6.23  &  \\[1mm]
    & $2^{17}$ &                                      & oot                &  \\
\midrule
% --- Group: P23 (2 formulas) ---
\multirow[t]{10}{*}{P23} 
    & $2^{9}$  & {\tiny\texttt{AG(EF(w = 1))}}           & 0.08\tiny$\pm$0.00  & 1.42\tiny$\pm$0.20 \\[1mm]
    & $2^{11}$ &                                      & 1.02\tiny$\pm$0.02  &  \\[1mm]
    & $2^{13}$ &                                      & 19.09\tiny$\pm$0.24  &  \\[1mm]
    & $2^{15}$ &                                      & 404.94\tiny$\pm$2.68 &  \\[1mm]
    & $2^{17}$ &                                      & oot                &  \\[1mm]
    & $2^{9}$  & {\tiny\texttt{!(AG(EF(w = 1)))} }        & 0.08\tiny$\pm$0.00  &  \\[1mm]
    & $2^{11}$ &                                      & 1.04\tiny$\pm$0.02  &  \\[1mm]
    & $2^{13}$ &                                      & 19.45\tiny$\pm$0.27  &  \\[1mm]
    & $2^{15}$ &                                      & 428.07\tiny$\pm$2.42 &  \\[1mm]
    & $2^{17}$ &                                      & oot                &  \\
\midrule
% --- Group: P24 (2 formulas) ---
\multirow[t]{10}{*}{P24} 
    & $2^{9}$  & {\tiny\texttt{EF(AG(w $\neq$ 1))}}        & 0.04\tiny$\pm$0.00  & 0.12\tiny$\pm$0.00 \\[1mm]
    & $2^{11}$ &                                      & 0.27\tiny$\pm$0.01  &  \\[1mm]
    & $2^{13}$ &                                      & 6.45\tiny$\pm$0.07  &  \\[1mm]
    & $2^{15}$ &                                      & 71.41\tiny$\pm$20.81 &  \\[1mm]
    & $2^{17}$ &                                      & oot                &  \\[1mm]
    & $2^{9}$  & {\tiny\texttt{!(EF(AG(w $\neq$ 1)))} }     & 0.04\tiny$\pm$0.00  &  \\[1mm]
    & $2^{11}$ &                                      & 0.26\tiny$\pm$0.00  &  \\[1mm]
    & $2^{13}$ &                                      & 6.60\tiny$\pm$0.14  &  \\[1mm]
    & $2^{15}$ &                                      & 93.10\tiny$\pm$23.08 &  \\[1mm]
    & $2^{17}$ &                                      & oot                &  \\
\midrule
% --- Group: P25 (2 formulas) ---
\multirow[t]{10}{*}{P25} 
    & $2^{9}$  & {\tiny\texttt{(c > 5) -> (AF(r > 5))}}     & 0.06\tiny$\pm$0.01  & 0.31\tiny$\pm$0.00 \\[1mm]
    & $2^{11}$ &                                      & 1.03\tiny$\pm$0.01  &  \\[1mm]
    & $2^{13}$ &                                      & 28.75\tiny$\pm$0.27  &  \\[1mm]
    & $2^{15}$ &                                      & 272.40\tiny$\pm$60.53 &  \\[1mm]
    & $2^{17}$ &                                      & oot                &  \\[1mm]
    & $2^{9}$  & {\tiny\texttt{!((c > 5) -> (AF(r > 5)))} }  & 0.06\tiny$\pm$0.01  &  \\[1mm]
    & $2^{11}$ &                                      & 1.05\tiny$\pm$0.02  &  \\[1mm]
    & $2^{13}$ &                                      & 28.62\tiny$\pm$0.43  &  \\[1mm]
    & $2^{15}$ &                                      & 262.69\tiny$\pm$60.53 &  \\[1mm]
    & $2^{17}$ &                                      & oot                &  \\
\bottomrule
\label{tab:ctlfin}
\end{longtable}
}
\end{document}